\newtheorem{definition}{Definition}
\newtheorem{proposition}{Proposition}
\newtheorem{assumption}{Assumption}
\newtheorem{problem}{Problem}
\newtheorem{theorem}{Theorem}
\newtheorem{remark}{Remark}
\newtheorem{fact}{Fact}
\newtheorem{procedure}{Procedure}
\begin{document}
 \date{}
   \title{Hierarchical 2-D Feature Coding for Secure Pilot Authentication in  Multi-User Multi-Antenna \\OFDM  Systems: A  Reliability Bound \\ Contraction Perspective}
  \author{Dongyang~Xu,~\IEEEmembership{Student Member,~IEEE,}
        Pinyi~Ren,~\IEEEmembership{Member,~IEEE,}
        and James~A.~Ritcey,~\IEEEmembership{Fellow,~IEEE}
}
\maketitle
\begin{abstract}
Due to the \emph{publicly-known} and \emph{deterministic} characteristic of pilot tones,   pilot authentication (PA)  in multi-user multi-antenna OFDM systems is very suspectable to the jamming/nulling/spoofing behaviors.  To solve this,  we in this paper develop a hierarchical 2-D feature (H2DF) coding theory that  exploits the hidden pilot signal features,  i.e., the energy feature and independence feature, to secure pilot information coding  which is  applied  between legitimate  parties   through a well-designed five-layer  hierarchical coding (HC) model  to  achieve secure multiuser PA (SMPA).    The reliability of SMPA  is  characterised using  the identification error probability (IEP) of pilot encoding and decoding, with  the exact closed-form  upper and lower bounds. However, this phenomenon of  non-tight bounds brings about the risk of  long-term  instability  in SMPA. Therefore,  a reliability bound contraction (RBC) theory  is developed  to shrink the bound interval and  practically, this is done by an easy-to-implement technique, namely,  codebook partition  within the H2DF code. In this process,  a  tradeoff  between the upper  and lower  bounds of IEP is identified and  a problem of  optimal upper-lower bound  tradeoff  is  formulated, with the objective of   optimizing the cardinality of sub-codebooks such that the upper and lower bounds coincide.   Solving this,  we finally derive an exact closed-form  expression for IEP, which realizes  a stable and highly-reliable SMPA. Numerical results validate the stability and resilience of H2DF coding  in  SMPA.
\end{abstract}
\begin{IEEEkeywords}
Physical-layer authentication,  anti-attack, multi-user OFDM, channel training,   hierarchical 2-D feature coding.
\end{IEEEkeywords}
\IEEEpeerreviewmaketitle
\section{Introduction}
\label{introduction}
\IEEEPARstart{R}{adio} security, either from a tactical perspective or in a commercial viewpoint, has drawn increasing attentions in wireless communication systems. The sophisticated characteristic of radio channels, such as the open and shared nature, create an operating environment vulnerable to intentional information security attacks that target specific radio technologies~\cite{Bogale}. Orthogonal frequency-division multiplexing (OFDM) technique becomes such a  typical victim when  it plays an increasing role in modern wireless systems,  standards (e.g., LTE, 802.11a/n/ac/ax/ah) or even under  tactical scenarios~\cite{Shahriar}. Without  comprehensive  precautions against attacks, OFDM technique  comes to be sensitive and  fragile in the respect of its waveform transmission and receiving which is very vulnerable to various  physical-layer attacks~\cite{Rahbari,Lichtman}. This paper investigates  the pilot-aware attack  on the channel estimation  process in multi-antenna OFDM communications~\cite{Xu_Optimal}. Conventionally, channel estimation  is performed with high accuracy by  using the publicly-known and deterministic  pilot tones  that are shared on the time-frequency resource grid (TFRG) by  all parties~\cite{Ozdemir}. Basically, the estimation performance  is   guaranteed  by perfect pilot authentication (PA)~\cite{Xu_CF}, since the authentication signal~\cite{Tu,Lai}, i.e., a unique pilot  tone from one certain legitimate user (LU),  is verified, therefore, known at the receiver (named Alice), and finally is enabled for precise  channel estimation that belongs to the LU. In other words, guaranteeing an exact and unique pilot tone for one LU means authenticating the authenticity of its channel state information (CSI), if estimated. However, PA mechanism lacks  specialized protections from the beginning and   a  pilot-aware attacker, named Ava,  can  easily jamm/null/spoof  those publicly-known  pilot tones by  launching pilot tone jamming (PTJ) attack~\cite{Clancy1,Clancy2}, pilot tone nulling (PTN) attack~\cite{Sodagari} and pilot tone spoofing (PTS) attack~\cite{Xu_ICA}. Finally  the channel estimation process at  Alice is seriously paralyzed.

\subsection{Related Works}
Basically, secure PA  here refers to  confirming  the authenticities of pilot tones from LUs  suffering above three  attacks.  This  includes how to detect any alteration to their authenticities  and how to protect and further maintain high authenticities.  Since  PA also means  authenticating CSIs, much work have been extensively investigated  on this area,  from narrow-band single-carrier systems~\cite{Zhou,Kapetanovic1,Wu1,Tugnait2,Xiong1,Kapetanovic2,Kang,Adhikary}  to wide-band multi-carrier systems~\cite{Xu_Optimal,Xu_CF,Clancy1,Clancy2,Sodagari, Xu_ICA,Shahriar2}.

The issue in PA in  narrow-band single-carrier systems was introduced  in~\cite{Zhou}  in which  a pilot contamination (PC) attack, one type  of  PTS attack,  was  evaluated. Following~\cite{Zhou},  much work were studied,  but  limited to  detecting the  alteration to pilot authenticities  by exploiting the physical layer information, such as auxiliary training or data sequences~\cite{Kapetanovic1,Wu1,Tugnait2,Xiong1} and some prior known channel information~\cite{Kapetanovic2,Kang, Adhikary}.  The  issue in PA in multi-subcarrier scenarios was first presented  by Clancy et al.~\cite{Clancy1}, verifying  the  possibility and effectiveness of PTJ attack. Following this,  PTJ attack was then studied for  single-input single-output (SISO)-OFDM communications in~\cite{Clancy2} which also introduced the PTN attack and then extended it to the  multiple-input multiple-output (MIMO)-OFDM system~\cite{Sodagari}.  The initial  attempt  to  safeguard  PA  under pilot aware attack  was proposed  in~\cite{Shahriar2}, that is, transforming the PTN and PTS attack into   PTJ attack by  randomizing the  locations and values of regular pilot tones on TFRG.   It  figured out the importance of random  pilot tone scheduling for avoiding the pilot aware attack.  Hinted by this,  authors in~\cite{Xu_ICA}, for a single-user scenario, proposed a coding based PA framework under PTS attack by exploiting pilot randomization  and a subcarrier-block discriminating coding (SBDC)  mechanism. In~\cite{Xu_Optimal}, the authors considered a  practical one-ring scattering  scenario in which a specific  spatial fading  correlation model, rather than a general form in~\cite{Xu_ICA}, was investigated.  They also proposed an independence-checking coding (ICC) theory for which SBDC could be just seen as its special form.  However,  the SBDC and ICC method can only differentiate two nodes (including Ava) at most since one more node will incur confusion on the discriminative feature, basically a binary result (e.g., the number of 1 digit is more than that of 0 digit, or not.), mentioned therein.  Out of  consideration for this,  authors in~\cite{Xu_CF} considered a two-user scenario and proposed a code-frequency block group (CFBG) code to support PA between two LUs. It introduced the necessity of a three-step solution, including pilot conveying, separation and identification.  The biggest problem is that when randomly-imitating attack happens, the code is invalidated and PA  then highly relies on  the difference between  spatial fading correlations of LUs and Ava whose correlation  model is generally hard to acquire.  If Ava has the same  correlation property  with one LU, for example, it has  the  same mean  angle of arrival (AoA) as one certain LU, the PA for that LU is also paralysed completely.
 \begin{figure*}[!t]
\centering \includegraphics[width=1\linewidth]{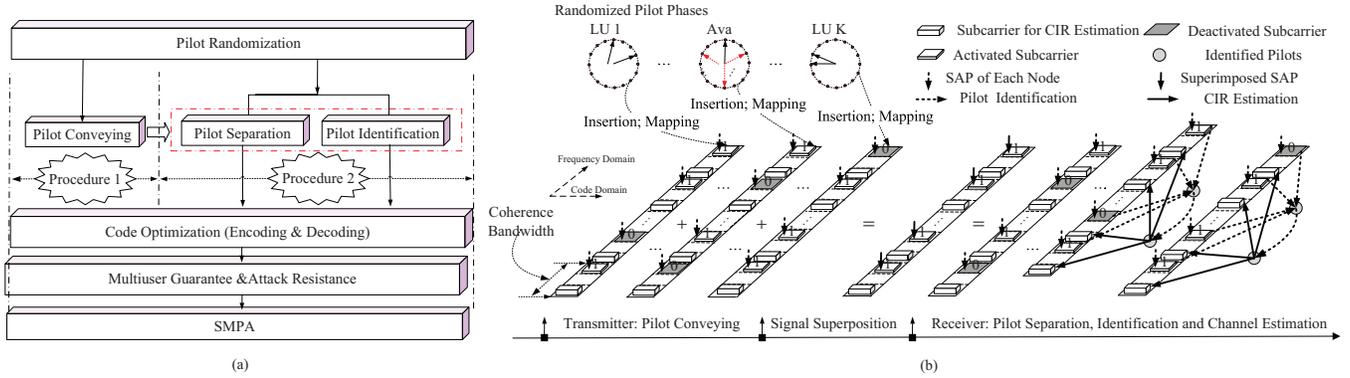}
\caption{(a) Diagram of the general procedures for  SMPA; (b) Specific procedures for SMPA. }\label{General_Architecture}
\end{figure*}
\subsection{Motivations}
The above observations  prompt us to establish a  secure  multiuser PA (SMPA)  mechanism  from the point of view of a pure coding approach and  also a  multi-user perspective.  As shown in Fig.~\ref{General_Architecture}~(a), pilot randomization is a prerequisite. Then the procedures of pilot conveying, separation and identification~\cite{Xu_Optimal, Xu_CF} are adopted, but with extra basic  considerations.
\begin{procedure}[Pilot Conveying]
Selectively activating and deactivating the OFDM subcarriers to create various subcarrier activation pattern (SAP) candidates; Encoding SAPs in such a way that those SAPs can carry pilot information in the form of codewords;
\end{procedure}
We emphasise that the pilot information in this paper refers to the pilot phases which are then randomized.  More clearly and intuitively,   the overall  process is depicted in Fig.~\ref{General_Architecture}~(b) and described as follows:

We  insert multiuser pilot tones into subcarriers on TFRG  in such a way  that  every single pilot subcarrier for SAP and  those for frequency-domain subcarrier (FS) channel estimation ( thus for channel impulse response(CIR) estimation) are located within
the range of coherence bandwidth but at different frequency-domain positions.  For the sake of simplicity, we configure  one pilot subcarrier for FS channel  estimation and one  paired pilot subcarrier for  SAPs. This operation guarantees the mutual independence of FS channels among adjacent  positions of each SAP.

On this basis, each LU  independently conveys their own pilot phase  in the form of encoded SAPs which are programmed  by codewords.  The specific principle is that   if  the $j$-th digit of the codeword  is equal to 1, the pilot tone signal  is inserted on the $j$-th subcarrier,  otherwise  this subcarrier will be  idle. In what follows,  pilot  separation and   identification naturally means codeword  separation and   identification.

In this context, the attacks will be transformed from PTJ, PTS, PTN  into the following hybrid  mode:
\begin{problem}[\textbf{Attack Model}]
A hybrid attack will include:
\begin{enumerate}
\item \textbf{Silence Cheating (SC):} Ava  keeps silence to misguide Alice since Alice cannot  recognize the non-existence of attacks.
 \item \textbf{Wide-Band Pilot (WB-PJ):} Ava  activates  the whole available subcarriers and thus launches WB-PJ attack to interfere LUs.  Therefore, the interpreted codeword at Alice is  a vector with all elements``1'',  which  carries no information.
\item \textbf{Partial-Band Pilot Jamming (PB-PJ):} Ava arbitrarily activates part of the subcarriers and launches  PB-PJ attack. The codeword interpreted  from  the observation subcarriers at A Jamming lice is seriously interfered and  misguided  if no special coding measure is taken.
 \item \textbf{Unpredictability:}  Ava could learn  the pilot tones employed by  each of  LUs  in advance and   jamming/spoofing/nulling the  pilot tones  of arbitrary one LU of interest. This is done by  searching  the list of  target LUs in store for attacking. This list is only known by Ava and unpredictable for both Alice and LUs.
\end{enumerate}
\end{problem}
Now the security goals  require  not only maintaining PA among LUs but also protecting those  established PA  from being attacked. We can see that PA is a probabilistic event and the security goal turns to be  the reliability of  pilot encoding/decoding.

We denote the  first requirement  by the  \textbf{Multiuser Guarantee} which is demonstrated in Problem 2 and denote  the second one by the \textbf{Attack Resistance} for which the attack model is given in Problem 1.  The relationship among pilot randomization,  multiuser guarantee and attack resistance is depicted in Fig.~\ref{General_Architecture}~(a).

\begin{problem}[\textbf{Multiuser Guarantee}]
The multiuser guarantee that is ensured by codewords includes three aspects:
\begin{description}
\item[\textbf{P. 2.1}] \textbf{Unique Pre-Separation Identification  (UPrSI):} To guarantee that each codeword has a unique identifer.
 \item[\textbf{P. 2.2}] \textbf{Uniquely Decipherable (UD):}
 \textbf{P. 2.2.1:} To  guarantee that each  superposition of up to  $K$ different codewords is  unique.
\textbf{P. 2.2.2:} To guarantee that each of the superimposed codeword can be  correctly decomposed   into a unique set of $K$ codewords.
\item[\textbf{P. 2.3}] \textbf{Unique Post-Separation Identification (UPoSI):}To ensure that  each of decomposed codewords is identified uniquely.
\end{description}
\end{problem}

 For  the second procedure to be designed, we stress that multiuser guarantee and attack resistance must be considered.
\begin{procedure}[Pilot Separation and Identification: A Mathematical Problem]
Those codewords for pilot conveying should be optimized such that  those codewords, though overlapped with each other (\textbf{Multiuser Guarantee})  and/or even  disturbed by Ava (\textbf{Attack Resistance}),  can be separated and identified with high reliability, thus decoded into the original pilots.
\end{procedure}
Having understood  above  procedures theoretically, we now turn to generally introduce the  practical procedures  as the Fig.~\ref{General_Architecture}~(b) indicates.   LUs and Ava  create  SAPs representing  their own randomized  pilot phases to be transmitted.  Those SAPs, after undergoing wireless channels, suffer from  the superposition interference from each other, and  finally  are superimposed and observed at Alice which separates and identifies  those pilots. This is  a basic process of multiuser PA.  Finally  those  authenticated pilots are utilized for channel estimation  using the estimator in~\cite{Xu_Optimal}. Until now, we have clarified  the procedures and key issues  for achieving SMPA.
\subsection{Contributions}
Solving  above issues requires  a reliable coding support. In a physical sense, the signals  from each node  carry  a lot of  features, such as, energy, independence and so forth, depending on how each node uses it. Different from the previous extra information, like spatial correlation information,   these signal features, when generated,  have  already been  hidden in the signals  and thus there is no need to provide them priorly  by system operators. The key is whether or not  we could  dig them out and how we use them.

For the first time, we propose  exploiting   those   signal  features to secure  information coding and  aim to answer the question, namely, \emph{can the hidden signal features  improve  the performance reliability  of  conventional coding  technique in attack environment?} We show the answer is yes,  and stress  that this novel and general comprehension towards coding technique constitutes the  core of our H2DF coding theory.

Before detailing our contributions, we need to clearly understand  what  type of signal structures Alice can employ, and recognize  the steps involved. In this paper, four basic steps are modeled, including  \textbf{1. extracting features, i.e., energy feature and independence feature}; \textbf{2. representing  features}; \textbf{3. encoding  features}; \textbf{4. decoding features}. Of all the  four steps, feature encoding and decoding  are  the core components determining  the final performance of SMPA mechanism.
Along  the lines of \textbf{Procedure 1} and \textbf{2}, we summarize the main contributions of this paper  as follows:
\begin{enumerate}
\item Basically and inevitably, we consider  examining the superposition  characteristics of multiple potential signals on each single pilot subcarrier.  We find that thanks to the  indelible and unique nature of the signal energy from each node, we could  \textbf{extract} and  \textbf{represent} the \textbf{energy feature} through the well-known energy detection technique  as the number of signals detected on each subcarrier.  We encode the derived number as  code digits (including binary digits) and construct a code-frequency domain on which  we  formulate feature encoding matrices in the form of  codebooks by deliberately grouping the digits into codewords.  Each binary codeword within codebooks  could precisely indicates how each of SAPs is triggered, thus achieving \textbf{Procedure 1}.
 \item\textbf{1)} We further  identify the second feature as the  \textbf{independence feature}  of  pilot signals from each  node. A differential coding technique is well designed  to fully  \textbf{extract} and  \textbf{represent}  this kind of  feature as the binary code. In this way,  the previous  feature encoding matrices is enabled to include the code information  of  both energy and  independence features.  The feature encoding matrices are optimized by  subtly coupling  the differential code with the cover-free code with the aim of supporting multiuser guarantee, which  constitutes the  \textbf{encoding} functionality  of  H2DF coding theory. \textbf{2)} For the  \textbf{decoding} functionality,  we construct a hierarchical decoding (HD) model   to  achieve attack resistance on the basis of  multiuser guarantee, which finally realizing \textbf{Procedure 2}.
\item  The  reliability  of  the overall encoding and decoding represents the resilience performance of  SMPA against attacks. To characterize  this metric,   we formulate the concept of identification error probability (IEP), bounded by the exact upper and lower  bounds.   This phenomenon of  bound fluctuation due to  the random selection of the codewords by each node indicates the long-term instability in  SMPA.  In order to reduce this instability,  a tradeoff between the upper and lower bounds is discovered, which prompts us to formally develop the bound contraction theory to further shrink  the bound interval.  A technique of  codebook partition is proposed to achieve this successfully and  an optimal upper-lower bound tradeoff  is realized.  Under this tradeoff,  an exact closed-form expression  of IEP is derived,  thus creating a  stable and highly-reliable SMPA performance.
\end{enumerate}

\emph{Organization:} In Section~\ref{MPPAA}, we present an overview of  pilot-aware attack on  multi-user PA in multi-antenna OFDM  systems. In Section~\ref{H2DFCE}, we introduce the encoding principle  of H2DF coding theory.   The decoding principle of H2DF coding theory  is described  in  Section~\ref{H2DFCD}.  A reliability bound  contraction theory   is  provided in Section~\ref{RBCT}.   Simulation results are presented in Section~\ref{SR} and finally  we conclude our work in Section~\ref{Conclusions}.

\emph{Notations:} We use boldface capital letters ${\bf{A}}$ for matrices, boldface small letters ${\bf{a}}$ for vectors , and small letters $a$ for scalars. ${{\bf{A}}^*}$, ${{\bf{A}}^{\rm{T}}}$, ${{\bf{A}}^{{H}}}$  and ${\bf{A}}\left( {:,1:x} \right)$ respectively denotes   the  conjugate operation, the transpose,  the conjugate transpose  and the first $x$ columns of matrix ${\bf{A}}$.  $\left\| {\cdot} \right\|$ denotes the Euclidean norm of a vector or a matrix. $\left| {\cdot} \right|$ is the cardinality of a set.  ${\mathbb{E}}\left\{  \cdot  \right\}$ is the expectation operator. $\otimes$ denotes  the Kronecker   product operator.   ${\rm{Diag}}\left\{ {\bf{a}} \right\}$   stands for the diagonal matrix with  the elements of column vector $\bf{a}$ on its diagonal.

\section{ Multi-User PA under Pilot Aware Attack:  Issues and Challenges}
\label{MPPAA}
\begin{figure}[!t]
\centering \vspace{-10pt}\includegraphics[width=0.75\linewidth]{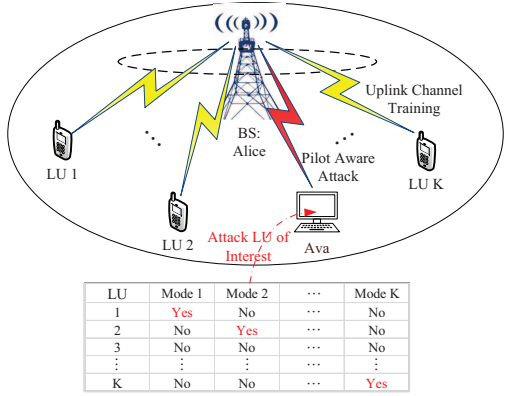}
\caption{System model of  $K$-user MISO-OFDM system under  the pilot aware attack in the uplink. }
\label{System_model}
\end{figure}
We in this section  outline a fundamental  overview of  multi-user PA issue under pilot aware attack, from a mathematical point of  view.  We will begin the overview by  introducing  the basic system  and  signal model, and then demonstrate this issue.   Besides this, we will  describe  the  advantage  of  pilot randomization in avoiding this issue and most importantly, identify  the  key challenge.
\begin{table*}\footnotesize
\begin{center}
\caption{Summary of Notations.}
\begin{tabular}{|l|l|}
\hline
Notations & Description \\
\hline
$N_{\rm T}$, $K$ & Number of transmitting antennas at BS and number of  LUs. \\
$ N^{\rm L}_{\rm E};N^{\rm A}_{\rm E}$  &    Number of subcarriers for CIR estimation:  for  the $m$-th LU;  for Ava.\\
$ N^{\rm L}_{\rm P}~(N^{\rm L}_{\rm P}=B);N^{\rm A}_{\rm P}$  &    Number of subcarriers for performing SAPs:   for  the $m$-th LU;  for Ava.\\
${\cal K}={\rm{\{ 1,}} \ldots {\rm{,}}K{\rm{\} }}$; ${\overline{\cal K}}$ & Index set of LUs; Index set of $K$ columns employed by $K$ LUs in $\bf B$.  \\
${\Psi^{\rm L}_{\rm E}}{\rm{ = }}\left\{ {{i_0},{i_1}, \ldots ,{i_{{N^{\rm L}_{\rm E}} - 1}}} \right\}$, ${\Psi^{\rm A}_{\rm E}}{\rm{ = }}\left\{ {{i_0},{i_1}, \ldots ,{i_{{N^{\rm A}_{\rm E}} - 1}}} \right\}$ & Index set of  subcarriers for CIR Estimation: for  the $m$-th LU;  for Ava. \\

${\Psi^{\rm L}_{\rm P}}{\rm{ = }}\left\{ {{i_0},{i_1}, \ldots ,{i_{{N^{\rm L}_{\rm P}} - 1}}} \right\}$, ${\Psi^{\rm A}_{\rm P}}{\rm{ = }}\left\{ {{i_0},{i_1}, \ldots ,{i_{{N^{\rm A}_{\rm P}} - 1}}} \right\}$ & Index set of  subcarriers  for performing SAPs and coding:  for  the $m$-th LU;  for Ava.\\

$x_{{\rm{L,}}m}^i\left[ k \right], {i \in {\Psi^{\rm L}_{\rm E}}}, {m \in {{\cal K}}}$;  $x_{\rm{A}}^i\left[ k \right], {i \in {\Psi^{\rm A}_{\rm E}}}$ &  Pilot tones at the $i$-th subcarrier and $k$-th symbol time: for the $m$-th LU; for Ava.  \\
$\rho _{{\rm{L}}, m}$, $\rho _{\rm{A}}$; ${\phi_{k,m}}$, $\varphi_{k,i}$ & Uplink  training power:  the $m$-th LU; for Ava; Pilot  phases: the $m$-th LU; for Ava. \\
${\bf{h}}_{\rm{L}, m}^i\in {{\mathbb C}^{L_{\rm s} \times 1}} $; ${\bf{h}}_{\rm{A}}^i \in {{\mathbb C}^{L_{\rm s}\times 1}}$  &  CIR  vectors, respectively from  the $m$-th LU and Ava to the $i$-th receive antenna of Alice. \\
$L_{\rm  s}$; $\sigma ^2$ &  Number of sampled multi-path taps in baseband, Average noise power of Alice.\\
${\bf {F}}\in {{\mathbb C}^{N_{\rm E}^{\rm L} \times N_{\rm E}^{\rm L}}}$; ${{\bf{F}}_{\rm{L}}}$; $T_c$ & DFT matrix; ${{\bf{F}}_{\rm{L}}} = \sqrt {N_{\rm E}^{\rm L}} {\bf{F}}\left( {:,1:L_{\rm s}} \right)$; Channel coherence time\\
 ${{\bf{v}}^i}\left[ k \right]\in {{\mathbb C}^{{N^{\rm L}_{\rm E}}\times 1}}$,  ${{\bf{v}}^i}\left[ k \right] \sim {\cal C N}\left( {0,{{{\bf{I}}_{N^{\rm L}_{\rm E}}}\sigma ^2}} \right)$  & Noise vector on time domain at the $i$-th antenna of Alice within the $k$-th symbol time. \\
 ${\cal A}$; ${\cal T} = \left\{ {{k_0}, \ldots ,{k_{{T_c} - 1}}} \right\}$ & $\left\{ {{\phi}:{{{\phi} = 2m\pi } \mathord{\left/
 {\vphantom {{{\phi _k} = 2m\pi } C}} \right.
 \kern-\nulldelimiterspace} C},0 \le m \le C - 1}, C=\left| {\cal A} \right| \right\}$; Set of OFDM symbols within $T_c$.  \\
 $M_{i}$ & Number of signals detected on the $i$-th subcarrier. \\
 ${{\bf{y}}_i}\left[ k \right]\in {{\mathbb C}^{{N_{\rm T}}\times 1}}$ &  Receiving signals  stacked at the $i$-th subcarrier  within the $k$-th OFDM symbol.\\
 ${{\bf{w}}_i}\in {{\mathbb C}^{{N_{\rm{T}}} \times 1}}$&  Noise signals  at the $i$-th subcarrier;  \\
 ${\bf{g}}_{k,i}^{{\rm{L}}}\in {{\mathbb C}^{{N_{\rm{T}}} \times 1}}$; ${\bf{g}}_i^{\rm{E}}\in {{\mathbb C}^{{N_{\rm{T}}} \times 1}}$ & Channel frequency response  vectors of the $k$-th LU and that of Ava at the $i$-th subcarrier. \\
   ${{n_i}}$; $\bf{c}$  &  Jamming  pilot symbols of Ava on the $i$-th subcarrier; Codeword of Ava. \\
  ${{\bf b}_{{\rm S},K}}$ and ${{\bf m}_{{\rm S},K}}$&  SP sum  and ASP sum of   H2DF codewords from  all LUs;  \\
   ${{\bf b}_{\rm I}}$ and ${\bf{m}}_{\rm I}$&  SP sum  and ASP sum of $\bf{c}$ with  H2DF codewords from  all LUs;  \\
${{{b}}_{{\rm{S}},K,i}}$; ${{{m}}_{{\rm{S}},K,i}}$; ${{{b}}_{{\rm{I}},i}}$;  ${{{m}}_{{\rm{I}},i}}$; $c_{i}$ & The $i$-th ($1\le i \le B$) element of ${{\bf{b}}_{{\rm{S}},K}}$, ${{\bf{m}}_{{\rm{S}},K}}$, ${{\bf{b}}_{\rm{I}}}$,  ${{\bf{m}}_{\rm{I}}}$  and $\bf{c}$. \\
 ${\overline{\cal{B}}}_{K}$ and ${\overline{\cal{M}}}_{K}$ & Set of all column vectors of ${\bf B}_{K}$ and ${\bf M}_{K}$;  \\
 $\cal D$ & Set of  position indices of digits in ${{\bf b}_{\rm I}}$. $\forall {i}\in {\cal D}$,  ${{{m}}_{{\rm{I}},{i}}}=1$.\\
\hline
\end{tabular}
\end{center}
\end{table*}
\subsection{System Description and Problem Model}
We consider a synchronous    multi-user  multiple-input single-output (MISO)-OFDM systems with a $N_{\rm T}$-antenna Alice and $K$ single-antenna LUs.  Here, pilot tone based multi-user channel estimation is considered in the uplink~\cite{Ozdemir}.  Conventionally,  multi-user PA is accomplished  by assigning LUs  with   \emph{publicly-known and deterministic} pilot tones that can be identified. This  mechanism is very  fragile and actually has no privacy. Without imitating the identities of LUs, Ava  merely with single antenna can  synchronously  interfere  pilot tones indexed  by ${\Psi^{\rm A}_{\rm E}}$ and launches the  behaviors shown in \textbf{Attack Model}.
\subsection{Signal Model}
In this subsection, we formulate  the receiving  signal model at Alice.  To begin with, we will give the concept of  pilot insertion pattern (PIP) which indicates  the way of  inserting   pilot tones  across  subcarriers and OFDM symbols.
\begin{assumption}[\textbf{Frequency-domain PIP}]
We in this paper assume $x^{j}_{{\rm{L,}}m}\left[ k \right] = {x_{{\rm{L,}}m}}\left[ k \right] = \sqrt {{\rho _{{\rm{L,}}m}}} {e^{j{\phi _{k,m}}}},\forall i, i \in {\Psi^{\rm{L}}_{\rm{E}}}, {m \in {{\cal K}}}$ for low overhead consideration and theoretical analysis. Alternatively,  we can superimpose $x^{i}_{{\rm{L,}}m}\left[ k \right] $ onto a dedicated  pilot sequence optimized under a non-security oriented scenario and utilize this new pilot for training. At this point,  $\phi_{k,m}$ can be an additional phase difference for security consideration. We do not impose the phase constraint on the PIP  strategies  of  Ava, that is,  $x^{i}_{{\rm{A}}}\left[ k \right] = \sqrt {{\rho _{\rm{A}}}} {e^{j{\varphi _{k,i}}}},i \in {\Psi^{\rm{A}}_{\rm{E}}}$.
\end{assumption}

Let us proceed to the basic OFDM procedure. First,  the  pilot tones of LUs and Ava  over $N^{\rm L}_{\rm E}$ subcarriers are  respectively  stacked as $N^{\rm L}_{\rm E}$ by $1$  vectors ${{\bf{x}}_{{\rm{L,}}m}}\left[ k \right] = \left[ {{x^{j}_{{\rm{L,}}m}}\left[ k \right]} \right]_{j \in \Psi^{\rm L}_{\rm E} }^{\rm{T}}$ and ${{\bf{x}}_{\rm{A}}}\left[ k \right] = \left[ {{x^{j}_{{\rm{A}}}}\left[ k \right]} \right]_{j \in \Psi^{\rm A}_{\rm E} }^{\rm{T}}$. Assume that the  length of cyclic prefix is larger than  the maximum length $L_{\rm s}$ of all channels.  The parallel streams, i.e.,  ${{\bf{x}}_{{\rm{L}},m}}\left[ k \right]$, $m \in \cal{K}$ and  ${{\bf{x}}_{{\rm{A}}}}\left[ k \right]$, are modulated with inverse fast Fourier transform (IFFT).

Then the time-domain $N^{\rm L}_{\rm E}$ by $1$  vector ${{\bf{y}}^i}\left[ k \right]$, derived by Alice after removing the cyclic prefix at the $i$-th receiving antenna, can be written as:
\begin{equation}\label{E.1}
{{\bf{y}}^i}\left[ k \right] = \sum\limits_{m = 1}^K {{\bf{H}}_{{\rm{C,}}m}^i{{\bf{F}}^{\rm{H}}}{{\bf{x}}_{{\rm{L,}}m}}\left[ k \right]}  + {\bf{H}}_{{\rm{C,A}}}^i{{\bf{F}}^{\rm{H}}}{{\bf{x}}_{\rm{A}}}\left[ k \right] + {{\bf{v}}^i}\left[ k \right]
\end{equation}
Here,  ${\bf{H}}_{{\rm{C}},m}^i$  is the $N^{\rm L}_{\rm E} \times N^{\rm L}_{\rm E}$ circulant matrices of the $m$-th LU, with the  first column  given by ${\left[ {\begin{array}{*{20}{c}}
{{\bf{h}}_{{\rm{L,}}m}^{{i^{\rm{T}}}}}&{{{\bf{0}}_{1 \times \left( {N^{\rm L}_{\rm E}- L_{\rm s}} \right)}}}
\end{array}} \right]^{\rm{T}}}$.
${\bf{H}}_{{\rm{C,A}}}^i$ is a $N^{\rm A}_{\rm E} \times N^{\rm A}_{\rm E}$ circulant matrix with the first column given by ${\left[ {\begin{array}{*{20}{c}}
{{\bf{h}}_{\rm{A}}^{{i^{\rm{T}}}}}&{{{\bf{0}}_{1 \times \left( {N^{\rm A}_{\rm E} - L_{\rm s}} \right)}}}
\end{array}} \right]^{\rm{T}}}$ and ${\bf{h}}_{\rm{A}}^i $ is  assumed to be independent with  ${\bf{h}}_{{\rm L},m}^i, \forall m \in \cal{K}$.

Taking  FFT,  Alice  finally derives the    frequency-domain $N^{\rm L}_{\rm E}$ by $1$ signal vector  at the $i$-th receive antenna as
\begin{equation}\label{E.3}
{\widetilde {{\bf{y}}}^i}\left[ k \right] =\sum\limits_{m = 1}^K {{{\bf{F}}_{\rm{L}}}{\bf{h}}_{{\rm{L}},m}^i{x_{{\rm{L}},m}}\left[ k \right]}  + {\rm{Diag}}\left\{ {{{\bf{x}}_{\rm{A}}}\left[ k \right]} \right\}{{\bf{F}}_{\rm{L}}}{\bf{h}}_{\rm{A}}^i + {{\bf{w}}^i}\left[ k \right]
\end{equation}
where  ${{\bf{w}}^i}\left[ k \right]={{\bf{F}}}{{\bf{v}}^i}\left[ k \right]$.
\subsection{Multi-User Channel Estimation Model}
 We only focus on the FS estimation model  under PTS attack mode. Ava  impersonates the $m$-th LU  by utilizing the same pilot tone learned.  In this case, there exists ${\Psi^{\rm{L}}_{\rm{E}}}\cup {\Psi^{\rm{A}}_{\rm{E}}} = {\Psi^{\rm{L}}_{\rm{E}}}$ and ${{{x}}^{i}_{{\rm{A}}}}\left[ k \right]={{{x}}_{{\rm{L}},m}}\left[ k \right] , \forall i, i \in {\Psi^{\rm{L}}_{\rm{E}}}$.
Stacking ${\widetilde {\bf{y}}^i}\left[ k \right]$ within $K$ OFDM symbol time, we can rewrite  signals in  Eq.~(\ref{E.3}) as:
\begin{equation}
{{\bf{Y}}}_{{\rm{PTS}}}^i = \sum\limits_{j= 1}^K {{{\bf{F}}_{\rm{L}}}{\bf{h}}_{{\rm{L}},j}^i{{\bf{x}}_{{\rm{L}},j}}}  + {{\bf{F}}_{\rm{L}}}{\bf{h}}_{\rm{A}}^i{{\bf{x}}_{{\rm{L}},m}} + {{\bf{W}}^i}
\end{equation}
where the $N^{\rm L}_{\rm E} \times K$   matrix ${\bf{Y}}_{{\rm{PTS}}}^i$ satisfies ${\bf{Y}}_{{\rm{PTS}}}^i = \left[ {\widetilde {\bf{y}}^i{{\left[ k \right]}_{1 \le k \le K}}} \right]$.  The $1 \times K$ vector ${{\bf{x}}_{{\rm{L}},m}}$ satisfies ${{\bf{x}}_{{\rm{L}},m}}{\rm{ = }}\left[ {{{x}_{{\rm{L}},m}}{{\left[ k \right]}_{1 \le k \le K}}} \right]$ and ${{\bf{W}}^i}$ is also a  $N^{\rm L}_{\rm E} \times K$  matrix with ${{\bf{W}}^i} = \left[ {{{\bf{w}}^i}{{\left[ k \right]}_{1 \le k \le K}}} \right]$.
For simplification, we exemplify the orthogonal pilots to demonstrate the influence of PTS  attack.   Given the orthogonal pilots with $ {{\bf{x}}_{{\rm{L}},m}}{\bf{x}}_{{\rm{L}},n}^ + = 0,\forall m \ne n$, a least square (LS) estimation of ${{{\bf{h}}^{i}_{{\rm L},m}}}$, contaminated by  ${{{\bf{h}}^{i}_{\rm{A}}}}$  with  a noise bias, can be given by:
\begin{equation}\label{E.7}
\widehat {\bf{h}} = \left\{ {\begin{array}{*{20}{c}}
{{{\bf{F}}_{\rm{L}}}{\bf{h}}_{{\rm{L}},1}^i + {{\bf{F}}_{\rm{L}}}{\bf{h}}_{\rm{A}}^i + {{\bf{W}}^i}{{\left( {{{\bf{x}}_{{\rm{L}},1}}} \right)}^ + }}&{if\, m=1}\\
{{{\bf{F}}_{\rm{L}}}{\bf{h}}_{{\rm{L}},2}^i + {{\bf{F}}_{\rm{L}}}{\bf{h}}_{\rm{A}}^i + {{\bf{W}}^i}{{\left( {{{\bf{x}}_{{\rm{L}},2}}} \right)}^ + }}&{if\, m=2}\\
 \vdots & \vdots \\
{{{\bf{F}}_{\rm{L}}}{\bf{h}}_{{\rm{L}},K}^i + {{\bf{F}}_{\rm{L}}}{\bf{h}}_{\rm{A}}^i + {{\bf{W}}^i}{{\left( {{{\bf{x}}_{{\rm{L}},K}}} \right)}^ + }}&{if\, m=K}
\end{array}} \right.
\end{equation}
where $\left( {\cdot} \right)^+ $ is  the Moore-Penrose pseudoinverse. As to describing  PTN  attack and PTJ attack, we can refer to   the mathematical interpretation  in~\cite{Xu_Optimal}. As we can see, the channel estimation is completely paralysed and unable to be predicted in advance. Note that this phenomenon  also occurs  even if non-orthogonal pilots are adopted. What's more, we stress that any  prior  pilot design clues given to resist attack would also give information away to Ava.
\subsection{Pilot Randomization  and  Key Challenge}
Pilot randomization can avoid the pilot aware attack without imposing  any prior information on the pilot design. The common method is to randomly  select  phase candidates. Note that each of the phase candidates is  defaultly mapped into a unique quantized  sample, chosen  from the set ${\cal A}$.
 Since phase information provides  the security guarantee, thus without the need of huge overheads,   we make the following assumptions:
 \begin{assumption}[\textbf{Time-domain PIP}]
 During  two adjacent OFDM symbol time, such as, $k_i,k_{i+1}$, $i\ge0$, two pilot phases ${\phi _{{k_i},m}}$ and ${\phi _{{k_{i+1},m}}}$  are kept with fixed phase difference, that is,   ${\phi _{{k_{i+1},m}}} - {\phi _{{k_{i},m}}} = \overline \phi_{m} $. Here, ${\phi _{{k_{i+1},m}}}$ and ${\phi _{{k_{i},m}}}$ are both random but $\overline \phi_{m}$  is deterministic and  publicly known.
 \end{assumption}
The value of $C$  affects  the reliability  of proposed SMPA architecture and  as discussed in the Procedure 2, pilot randomization  also brings the necessity of  novel coding   theory.


\section{Hierarchical 2-D Feature Coding for SMPA Architecture: Encoding Part}
\label{H2DFCE}
Basically, any coding strategy includes the encoding  and decoding part. In this section,  the  encoding part of H2DF coding  is formulated,  which embraces  three parts, that is, energy feature extraction,  energy feature representation (for satisfying Procedure 1) and the feature encoding (only provides  the multiuser guarantee of Procedure 2).
\subsection{Energy Feature Extraction}
\label{EFE}
 A commonsense is that  wireless signal energy  is indelible. Using  the technique of eigenvalue ratio based energy detection (ERED)  in~\cite{Kobeissi}, we hope to precisely measure  the number of aggregated signals at subcarriers. The number  represents the energy feature that we could extract and encode further.

 Let us focus on a physical phenomenon, that is, the signal (or energy) superposition on each single subcarrier. This will contribute to  our quantitative modelling  for the energy  feature. On one hand, if we examine the  SAPs  employed by each node, we could find   they  are random and mutually independent, leading to the occurrence and superposition of activated and deactivated subcarriers. In other words,  the number of signals on each single subcarrier and their identities  are completely  unknown and unpredictable. On the other hand,  this uncertainty extends to the case where  the attacker is involved and  configure  arbitrary SAPs to interfere LUs.  Therefore, each subcarrier may carry at most $K+1$ signals and at least no  signal, depending on the choices of $K+1$ SAPs.

To capture the variations  of the number of aggregated signals on arbitrary single subcarrier,  a $\left( {K + 2} \right) \times {N_{\rm{T}}}$ receiving signal matrix within $K + 2$ OFDM symbols, denoted by ${{\bf{Y}}_{\rm{D}}}$,   is created for energy  detection.  Given the normalized covariance matrix defined by $\widehat {\bf{R}} = \frac{1}{{{\sigma ^2}}}{\bf{Y}}_{\rm D}{{\bf{Y}_{\rm D}}^{\rm{H}}}$, we define  its ordered eigenvalues  by  ${\lambda _{{K+2}}} >  \ldots  > {\lambda _1} > 0$ and construct  the test statistics by $T = \frac{{{\lambda _{K+2}}}}{{{\lambda _{1}}}}\mathop {\gtrless}\limits_{{{{\overline{\cal H}}_0}}}^{{{{{ {\cal H}}_0}}}} \gamma$ where $\gamma$ denotes the  decision threshold.  The hypothesis ${{ {\cal H}}_0}$  means that there exist signals and  ${{\overline{\cal H}}_{0}}$ means the opposite.

Based on ${{\bf{Y}}_{\rm{D}}}$,  Eq. (49) in~\cite{Kobeissi}  provides  a  decision threshold  function $\gamma {\rm{ }} \buildrel \Delta \over = f\left( {{N_{\rm{T}}},K,{P_f}} \right)$, for measuring how many  antennas on one subcarrier  are required to achieve a certain probability of  false alarm denoted by  ${P_f}$.  Therefore, we could establish a single-subcarrier encoding (SSE) principle,  finally encoding the number of detected signals  into binary  and $M$-ary digits.
 \begin{definition}[\textbf{SSE}]
Given fixed $K$ and $N_{\rm T}$,  one subcarrier  can be precisely encoded if, for any $\varepsilon > 0$, there exists a positive number $\gamma \left( \varepsilon  \right)$  such that, for all  $\gamma  \ge \gamma \left( \varepsilon  \right)$, ${P_f}$  is smaller than $\varepsilon $.
\end{definition}
 We should note that $f\left( {{N_{\rm{T}}},K,{P_f}} \right)$ is a monotone decreasing  function of two independent variables, i.e., ${N_{\rm{T}}}$ and ${P_f}$ but a monotone increasing  function of  $K$. For a given probability constraint ${\varepsilon}^*$, we could always expect a lower bound $\gamma \left( {{\varepsilon ^*}} \right)$ satisfying $\gamma \left( {{\varepsilon ^*}} \right) = f\left( {{N_{\rm{T}}},K,{\varepsilon ^*}} \right)$. Under this equation, we could flexibly configure  ${N_{\rm{T}}}$, $K$ and $\gamma \left( {{\varepsilon ^*}} \right)$ to make ${\varepsilon}^*$  approach zero~\cite{Kobeissi}.   We also find that  the value of $\gamma$  achieving zero-${\varepsilon}^*$  is decreased with the increase of  ${N_{\rm{T}}}$, but increased with the increase of $K$.
   \begin{figure*}[!t]
\centering\includegraphics[width=1.0\linewidth]{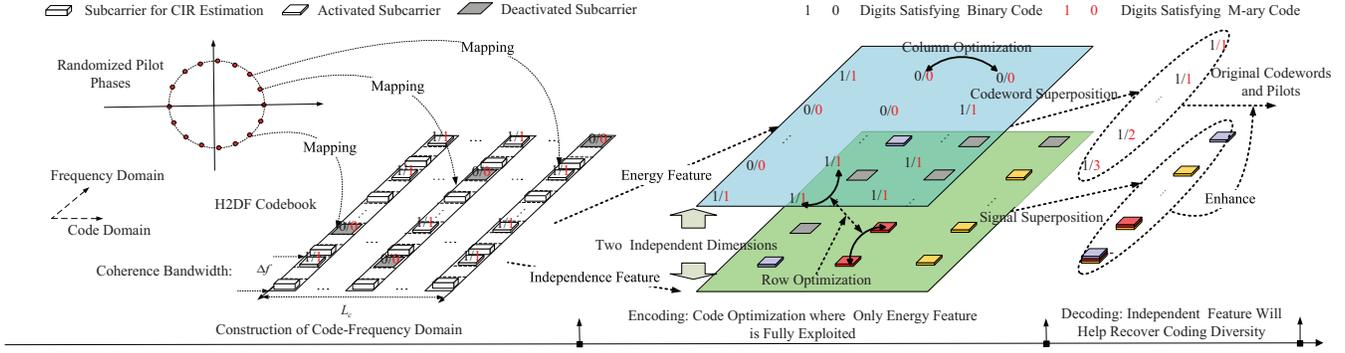}
\caption{ Construction of  code-frequency domain on which  a general description of H2DF encoding/decoding is depicted. }
\label{Pilot_encoding}
\end{figure*}

Using $\gamma \left( 0 \right)$ as the detection threshold, Alice  constructs new test statistics $T_{i} = \frac{{{\lambda _i}}}{{{\lambda _{1}}}}\mathop {\gtrless}\limits_{{{{\overline{\cal H}}_i}}}^{{{{{ {\cal H}}_i}}}} \gamma \left( 0 \right), 2\le i\le K+1$.  The hypothesis ${{ {\cal H}}_i}$  means $\left| {K +3- i} \right|$ signals coexist and  ${{\overline{\cal H}}_{i+1}}$ means the opposite. Using this, Alice  can determine the number of coexisting signals on arbitrary  single subcarrier. For example,  two  signals  are recognized only  when  both ${{ {\cal H}}_{K+1}}$  and ${{\overline{\cal H}}_{K}}$ hold true.

\subsection{Energy Feature Representation}
Basically, we could derive two types of code representing the energy feature. The first one is \textbf{binary  code} and the other one is \textbf{$M+1$-ary code}.
To formulate the code, we begin by constructing  the code-frequency domain.
\subsubsection{Code-Frequency Domain}
We denote the binary  digit corresponding to the $ m$-th pilot subcarrier   by ${s_{1, m}} $ satisfying:
 \begin{equation}\label{E.32}
{s_{1, m}} = \left\{ {\begin{array}{*{20}{c}}
1&{M_{m} \ge 1}\\
0&{otherwise}
\end{array}} \right.
  \end{equation}
Naturally, the  $\left( {M + 1} \right)$-ary digit is defined by:
  \begin{equation}
  {s_{2, m}} = M_{m}
   \end{equation}
Furthermore,  we denote a binary code vector set by ${{\cal S}_{1}}$  with ${{\cal S}_{1}}= \left\{ {\left. {\bf{s}_{1}} \right|{s_{1,m}} \in \left\{ {0,1} \right\},1 \le m \le {{L_c}}} \right\}$ where ${L_c}$ denotes the maximum length of  the  code.  Similarly, we   denote the $\left( {M + 1} \right)$-ary code vector set  by ${{\cal S}_{2}}$  satisfying   ${{\cal S}_{2}}= \left\{ {\left. {\bf{s}_2} \right|{s_{2,m}} \in \left\{ {0,\ldots, M} \right\},1 \le m \le {{L_c}}} \right\}$.

Finally, a code frequency domain with hybrid binary and $\left( {M + 1} \right)$-ary code digits can be formulated as a set of pairs $\left( {{{\bf{s}}},b} \right)$ with ${\bf{s}} \subset {{\cal S}_{1}} \cup {{\cal S}_{2}}$ and $1\le b \le B$ where $b$ is an integer  representing the subcarrier  index of appearance of the code. The construction process  can be depicted in Fig.~\ref{Pilot_encoding}.
\subsubsection{Achieving Procedure 1}
Grouping  the code digits  on code-frequency domain, we can derive  two types of  codes.
\begin{definition}
We call a $B\times C$ binary matrix satisfying  ${\bf{B}} = \left[ {{b_{j,i}}} \right]_{1 \le j \le B,1 \le i \le C}, {b_{j,i}} \in {\bf{s}} \subset {{\cal S}_{1}}$  and a  $B\times C$ $\left( {M + 1} \right)$-ary matrix  satisfying ${\bf{M}} = {\left[ {{m_{j,i}}} \right]_{1 \le j \le B,1 \le i \le C}}, {m_{j,i}} \in {\bf{s}} \subset {{\cal S}_{2}}$ as  the feature encoding matrices.
The $i$-th column of ${\bf B}$ and ${\bf M}$  are respectively denoted by  ${\bf b}_{i}$ and ${\bf m}_{i}$ with ${{\bf{b}}_i} = {\left[ {\begin{array}{*{20}{c}}
{{b_{1,i}}}& \cdots &{{b_{B,i}}}
\end{array}} \right]^{\rm{T}}}$ and ${{\bf{m}}_i} = {\left[ {\begin{array}{*{20}{c}}
{{m_{1,i}}}& \cdots &{{m_{B,i}}}
\end{array}} \right]^{\rm{T}}}$.   We call ${{\bf{b}}_i}$ a codeword  of $\bf B$ of  length $B$ and  ${{\bf{m}}_i}$ a codeword  of $\bf M$ with the same length..
\end{definition}
Each  LU could represent its energy feature  using binary codeword which also indicates its corresponding SAP.
\subsubsection{Codeword Arithmetic Principle}Two arithmetic operations between codewords are formulated, depending on  the specific code definition.
\begin{definition}
The superposition (SP) sum  ${{\bf{z}}_{i,j}} = {{\bf{b}}_{i}}{{\bigvee}}{{\bf{b}}_{j}}, 1\le i, j\le {C}$ (designated as the digit-by-digit Boolean sum) of two $B$-dimensional binary codewords  is defined by:
  \begin{equation}\label{E.33}
{z_{i,j,k}} = \left\{ {\begin{array}{*{20}{c}}
0&{if\,\,{b_{k,i}} = {b_{k,j}} = 0}\\
1&{otherwise}
\end{array}} \right., \forall  1\le k\le {B}
  \end{equation}
  where ${z_{i,j,k}}$ represents the $k$-th element of vector ${{\bf{z}}_{i,j}}$.   We say that a binary vector $\bf x$ \textbf{covers} a binary vector $\bf y$ if the Boolean sum satisfies  ${\bf{y}}{{\vee}}{\bf{x}}={\bf{x}} $
  \end{definition}

  \begin{definition}
 The algebraic superposition (ASP) sum (designated as the digit-by-digit  sum)  is defined by  ${\bf{d}}_{i,j} = {{\bf{b}}_{i}}{{+}}{{\bf{b}}_{j}}, 1\le i,j\le {C}$ in which two $B$-dimensional $\left( {M + 1} \right)$-ary codewords satisfy:
  \begin{equation}\label{E.34}
{{ d}_{i,j,k}} ={{ m}_{k,i}} + {{ m}_{k,j}}, \forall  1\le k\le {B}
  \end{equation}
   where  ${{ d}_{i,j,k}}$ denotes the  $k$-th element of vector ${\bf{d}}_{i,j}$.
 \end{definition}
\begin{figure*}[!t]
\centering \includegraphics[width=1.0\linewidth]{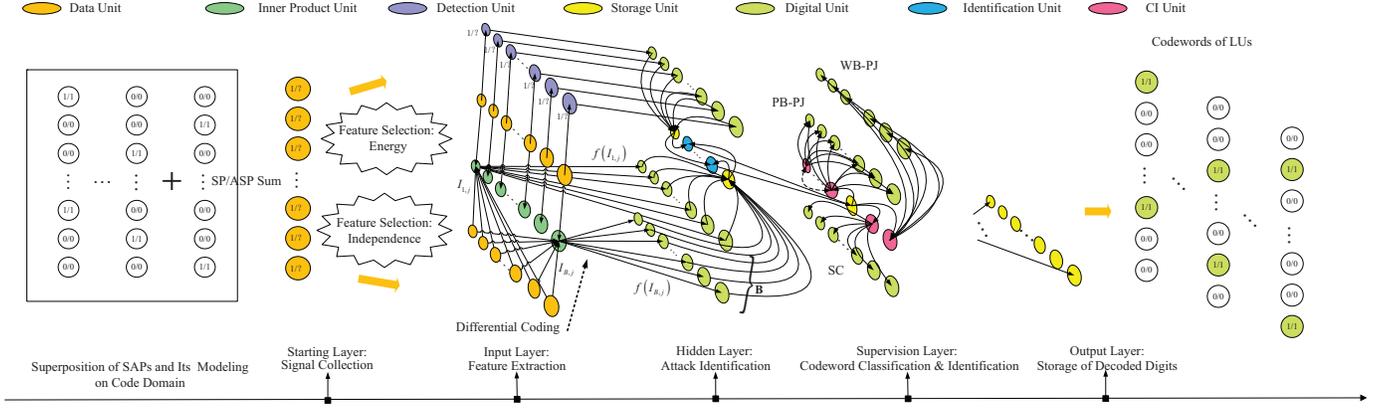}
\caption{HD  model. The starting layer performs signal collection which is interpreted as a superposition process of $K+1$ SAPs.  The same  signal information, divided into two paths, are then passed  to the data units in the  input  layer. For one path,  the energy feature extraction is guaranteed and for the other one,  the independence feature extraction is done. The resulted digits are propagated forward through hidden  and supervision layer,  finally facilitating the presentation of   $K$ codeword vectors of LUs in the output layer.}
\label{MLR}
\end{figure*}
\subsection{Feature Encoding: Coupling  Independence Features with Coding Diversity}
Here we  aim to further optimize the feature encoding  matrix.  As shown in Fig.~\ref{Pilot_encoding}, this is done, 1) by creating  the potential  diversity of cover-free coding   across the columns of binary matrix; 2) integrating the independence feature of receiving signals into its rows and coupling it with  cover-free coding  naturally. We  begin our  discussion by introducing the fundamental notion:
\begin{definition}
A $B\times C$ binary matrix ${\bf{B}}{\rm{ = }}{\left[ {{b_{j,i}}} \right]_{{\rm{1}} \le j \le B,{\rm{1}} \le i \le C}}$  is called a H2DF-$\left( {K,L,B} \right)$ code of length $B$, size $C$ and order $K$, if the following conditions are satisfied:

1) The arithmetic operation among codewords in $\bf B$  obeys  the SP principle.

2) \textbf{Column-Wise  Cover-free Coding (Column Optimization):} For arbitrary  two sets of columns, i.e., ${\cal P},{\cal Q} \subset \left\{ {1,2, \ldots ,C} \right\}$ such that  $\left| \cal P \right| = K$, $\left| \cal Q \right| = L$, and ${\cal P} \cap {\cal Q} = \emptyset $, there exists a row $i \in \left\{ {1,2, \ldots ,B} \right\}$ such that ${b_{i,j}} = 0,\forall j \in  \cal P $ and ${b_{i,j^{'}}} = 1,\forall j^{'} \in \cal Q $.

3) \textbf{Per-Word Independence-Aided Differential Coding (Row Optimization):} For any two  positions, i.e.,  $i$, $j$, on the frequency domain,  one within $T_{c}$ there exists:
\begin{equation}
\bigvee\limits_{l \in {{\overline {\cal K}}}} {\left( {{b_{i,l}} \oplus {b_{j,l}}} \oplus 1 \right)}  = {d_{i,j}}
\end{equation}
where ${d_{i,j}} = f\left( {{I_{i,j}}} \right),{I_{i,j}} = \left\langle {\frac{{{{\bf{y}}_i}\left[ k \right]}}{{\left\| {{{\bf{y}}_i}\left[ k \right]} \right\|}},\frac{{{{\bf{y}}_j}\left[ k \right]}}{{\left\| {{{\bf{y}}_j}\left[ k \right]} \right\|}}} \right\rangle, \forall k, k \in {\cal T}$.  $ {d_{i,j}}$ denotes the differential code and $\left\langle {\cdot} \right\rangle $ denotes  inner product operation.  $f$ represents  the differential encoder with decision  threshold  $\gamma$ and satisfies $f\left( x \right) = \left\{ {\begin{array}{*{20}{c}}
0&{x \le r}\\
1&{x > r}
\end{array}} \right.$.
  \end{definition}
  Four situations could occur on the $i$-th pilot subcarrier: 1) No signal exists, that is, ${{\bf{y}}_i}\left[ k \right] = {{\bf{w}}_i}$; 2) Only signals from Eva exist, that is, ${{\bf{y}}_i}\left[ k \right] = {\bf{g}}_i^{\rm{E}}{n_i}\left[ k \right] + {{\bf{w}}_i}$; 3) Only signals from LUs exists, that is,  ${{\bf{y}}_i}\left[ k \right] = \sum\limits_{j = 1}^{{{M_{i}}}} {{\bf{g}}_{j,i}^{\rm{L}}{x_{{\rm{L}},j}}\left[ k \right] + } {{\bf{w}}_i},{{M_{i}}} \ge 1$; 4)  Both of signals from LUs and Eva exist, that is, ${{\bf{y}}_i}\left[ k \right] = {\sum\limits_{j = 1}^{{{M_{i}}} - 1} {{\bf{g}}_{j,i}^{\rm{L}}{x_{{\rm{L}},j}}\left[ k \right] + {\bf{g}}_i^{\rm{E}}{n_i}\left[ k \right] + } {{\bf{w}}_i},{{M_{i}}} \ge 2}$.

To identify the principle of designing $\gamma$,  we give the following interpretation.  According to   law of large numbers,   the inner product  between  signals from two independent  individuals, i.e. a LU and Ava,  approaches  zero~\cite{Hoydis}. On the contrary, the inner product between signals from the same  node can reach  a value with its amplitude equal to one. In theory,  the value of $r$ can thus be configured to be a certain value, i.e., 0.5.
\begin{remark}
The diversity of column-wise cover-free coding is a pure code property, independent  with  the characteristic  of per-word independence-aided differential coding which is intrinsically  a data-driven concept.   We stress that the two codes  coexist without affecting each other, and naturally agree with each other when the number of antennas increases.
\end{remark}
 In what follows, we will prove that the  coding diversity  can perfectly provide multiuser guarantee but  lack the ability of resisting attack without the assistance of the row property which has to be exploited  in the decoding part. Nevertheless, we stress that the coding diversity does not make no sense. The row property has been coupled  with and included in  the code but not yet been  exploited in the encoding procedure.
\begin{fact}[\textbf{Achieving Multiuser Guarantee of Procedure 2}]
We in this paper consider the special case where $L=1$. The cover-free coding is introduced in~\cite{Kautz,yachkov}. In this case, the boolean sum of any subset of $k \le K$ codewords in $\bf B$  does not cover any codeword that is not in the subset.  For a constant-weight H2DF-$\left( {K,1,B} \right)$ code,  two arbitrary  SP sums,  each superimposed by $k \le K$ codewords,  are identical if and only if the two  codeword sets  constituting the two sums are completely identical as well.  This property  guarantees the UD  property.

We spilt the columns  of H2DF-$\left( {K,1,B} \right)$ code  matrix $\bf{B}$ into $K$ independent clusters. The $i$-th  cluster  includes $\left[ {{C \mathord{\left/
 {\vphantom {C K}} \right.
 \kern-\nulldelimiterspace} K}} \right]$  columns indexed by ${\cal B}_{i}$ and  constitutes a so-called \textbf{submatrix} denoted  by   $\left[ {{{\bf{b}}_{j \in {{\cal B}_{i}}}}} \right]$ which is  exclusively allocated  to the $i$-th LU.  Since  the UD property of $\bf{B}$  has been satisfied, those submatrices can naturally satisfy the properties of  UPrSI and UPoSI. In this way,  H2DF-$\left( {K,1,B} \right)$ code  can support multiuser guarantee, provided that there is no attack.
\end{fact}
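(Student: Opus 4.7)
My plan is to prove the Fact in three stages corresponding to the three sub-properties P.2.1, P.2.2, and P.2.3, all of which will follow from a careful unpacking of the column-wise cover-free condition in Definition 6 with $L=1$.

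First, I would upgrade the cover-free condition from the ``exactly $K$'' statement in Definition 6 to the ``at most $K$'' statement used in the Fact. Given any subset $\mathcal{S}$ of $k \le K$ codewords of $\mathbf{B}$ and any codeword $c$ not in $\mathcal{S}$, I enlarge $\mathcal{S}$ to a set $\mathcal{P}$ of exactly $K$ columns by adjoining $K-k$ arbitrary columns distinct from $c$. Applying Definition 6 to the pair $(\mathcal{P}, \{c\})$ yields a row index $i$ at which every column of $\mathcal{P}$ carries a $0$ while $c$ carries a $1$. Since $\mathcal{S} \subseteq \mathcal{P}$, the SP sum $\bigvee_{\mathbf{b} \in \mathcal{S}} \mathbf{b}$ also has a $0$ in row $i$, so by Definition 3 it cannot cover $c$.

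Second, I would deduce the UD property (P.2.2) from this strengthened cover-free property together with constant weight. For P.2.2.1, assume for contradiction that two distinct subsets $\mathcal{S}_1, \mathcal{S}_2$, each of size at most $K$, produce the same SP sum. Pick $c \in \mathcal{S}_1 \setminus \mathcal{S}_2$ (swapping $\mathcal{S}_1$ and $\mathcal{S}_2$ if needed). The SP sum of $\mathcal{S}_1$ trivially covers $c$, hence by hypothesis the SP sum of $\mathcal{S}_2$ does as well, contradicting Step 1 applied to $(\mathcal{S}_2, c)$. This gives injectivity of the map $\mathcal{S} \mapsto \bigvee_{\mathbf{b}\in\mathcal{S}} \mathbf{b}$, which is exactly P.2.2.1, and P.2.2.2 is the algorithmic reading of this injectivity: the SP sum uniquely determines the component set. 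The constant-weight hypothesis is used here to rule out ambiguity between subsets of \emph{different} cardinalities when interpreting an observed sum, since two such subsets could otherwise share the same support pattern under contrived constructions.

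Third, I would verify UPrSI (P.2.1) and UPoSI (P.2.3) from the column partition $\{\mathcal{B}_1, \ldots, \mathcal{B}_K\}$. Distinctness of every column in $\mathbf{B}$ is a one-line corollary of the strengthened cover-free property applied to singletons $\mathcal{S} = \{\mathbf{b}_{j'}\}$ and $c = \mathbf{b}_j$. Because the index sets $\mathcal{B}_i$ are pairwise disjoint and each LU is restricted to its own submatrix, each transmitted codeword carries a unique owner label, giving UPrSI. For UPoSI, after the UD decomposition returns the unique set of $K$ codewords comprising the observed superposition, the owner of every recovered codeword is read off by locating the unique index $i$ such that the codeword's column lies in $\mathcal{B}_i$.

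The main obstacle I expect is not any one of these steps in isolation, but cleanly pinpointing where each structural hypothesis is actually used: the step from $|\mathcal{P}|=K$ to $|\mathcal{P}|\le K$ must be done without accidentally invoking $L>1$; the constant-weight assumption must be invoked only where cross-cardinality ambiguity could otherwise arise; and the partition-into-submatrices argument must be stated so that UPrSI/UPoSI are consequences of disjointness plus UD rather than independent hypotheses. Once this bookkeeping is set straight, the remainder is a direct unpacking of Definitions 3 and 6.
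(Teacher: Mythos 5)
Your proposal is correct and follows essentially the same route as the paper, which states this Fact without a written proof and simply defers to the standard cover-free/superimposed-code results of Kautz--Singleton and D'yachkov et al.\ \cite{Kautz,yachkov}; your three steps (padding a $k\le K$ subset up to size $K$, the covering contradiction for unique decipherability, and reading off owners from the disjoint column clusters) are exactly the textbook argument behind those citations. One minor remark: the covering contradiction in your second step already rules out ambiguity between subsets of different cardinalities, so the constant-weight hypothesis is not actually needed for P.2.2 --- in this paper it matters elsewhere (e.g., for the energy-detection and ASP-sum machinery), not for the UD implication itself.
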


\begin{definition}
A $B\times C$ binary matrix $\bf M$  is called a paired H2DF-$\left( {K,1,B} \right)$ code of length $B$, size $C$ and order $K$,
if $\bf M$, equal to  $\bf B$, has its codewords obeying  the ASP principle
\end{definition}
\begin{definition}
 We define the ${B \times \left( {\begin{array}{*{20}{c}}
C\\
k
\end{array}} \right)}$ SP-sum matrix  ${{\bf{B}}_k}, k = 2, 3,..., K$, which is the collection of all of the SP sums of  codewords from $\bf B$, taken exactly $k$ at a time.   Correspondingly, the ${B \times \left( {\begin{array}{*{20}{c}}
C\\
k
\end{array}} \right)}$ ASP-sum matrix ${{\bf{M}}_k}, k = 2, 3,..., K$ denotes the collection of all of the ASP sums of those codewords collected from $\bf B$,  taken exactly $k$ at a time. Each column  of ${{\bf{B}}_k}$ (or ${{\bf{M}}_k}$)  represents a unique SP-sum (or ASP-sum) codeword.
\end{definition}
Let us examine the ability of column property  of H2DF encoding principle  in resisting attack.  Ava could use its intended SAPs to cause confusion on the detection of  signals  on  any victim  subcarrier.  For example, we consider  cover-free  codewords of three LUs, that is, $ \left[ {\begin{array}{*{20}{c}}
1&0&0 &1
\end{array}} \right]$, $ \left[ {\begin{array}{*{20}{c}}
0&0&1 &1
\end{array}} \right]$, and $ \left[ {\begin{array}{*{20}{c}}
1&0&1 &0
\end{array}} \right]$ and an  codeword from Eva, that is, ${\bf{c}} =\left[ {\begin{array}{*{20}{c}}
1&1&0 &0
\end{array}} \right]$. After detection, Alice could derive the final superposed binary codeword  equal to $\left[ {\begin{array}{*{20}{c}}
1&1&1 &1
\end{array}} \right]$, which actually  indicates no any useful information and imposes huge confusion  on cover-free decoding.
 In this case, the decoding process is imprecise and  multiuser guarantee is paralyzed.

\section{Hierarchical 2-D Feature Coding for SMPA Architecture: Decoding  Part}
\label{H2DFCD}
We in this section attempt to build up the \textbf{Attack Resistance} of Procedure 2  while maintaining the  perfect multiuser guarantee. Basically, we aim to reach the potential, that is,
 \emph{On the basis of  the energy feature, Alice  could take advantage of the independence feature  to recover the  significant potential of coding diversity for securing against the hybrid attack.} This is done by upgrading  the decoding part of  UD property through a  HD  model shown in Fig.~\ref{MLR}. In this model,  operation  units with specific functionalities  are connected and distributed among  five  sub-layers, including  the starting layer, input layer, hidden layer, supervision layer and output layer. In what follows, we begin our discussion by the starting layer.
\subsection{Starting Layer (\textbf{Signal Collection $\&$ Mathematical Modeling for Signal Superposition in SMPA Architecture})}
 \textbf{Input:} Multiple independent  signals from $K+1$ nodes.

As the start of SMPA architecture,  the superposition of   observed signals on subcarriers at Alice   brings the superposition of SAPs and thus  the superposition of codewords mathematically.  Two types of  codeword superposition  can be  characterised by:
\begin{equation}
{{\bf{b}}_1} \bigvee  \cdots  \bigvee {{\bf{b}}_K}={{\bf{b}}_{{\rm{S}},K}}, {{\bf{b}}_{{\rm{S}},K}} \bigvee {\bf{c}}={{\bf{b}}_{\rm{I}}}
\end{equation}
and
 \begin{figure*}[!t]
\includegraphics[width=1\linewidth]{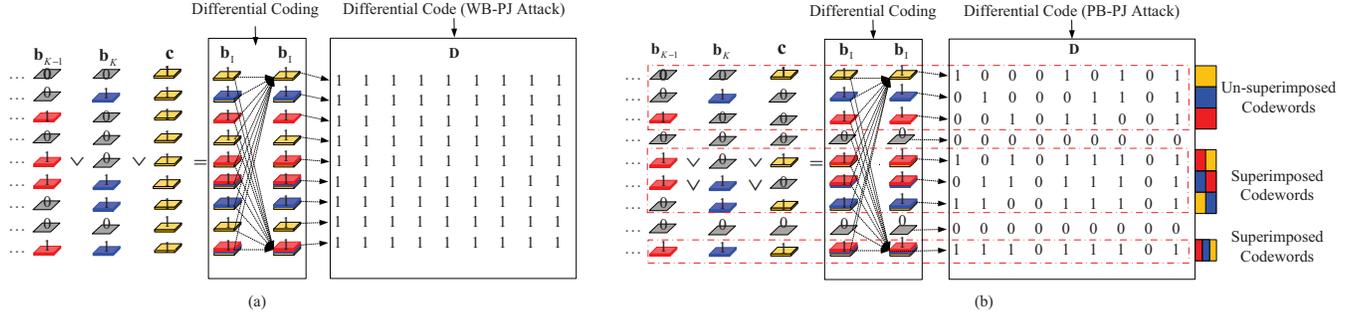}
\caption{ Examples of 2-D differential coding; (a) Under WB-PJ attack; (b) Under PB-PJ attack. }\label{H2DF-D-coding1}
\end{figure*}
\begin{equation}
{{\bf{m}}_1} +  \cdots  + {{\bf{m}}_K}= {{\bf{m}}_{{\rm{S}},K}}, {{\bf{m}}_{{\rm{S}},K}} + {\bf{c}}={{\bf{m}}_{\rm{I}}}
\end{equation}
where  ${{\bf{b}}_i}\in \left[ {{{\bf{b}}_{j{'} \in {{\cal B}_i}}}} \right]$ and ${{\bf{b}}_i}={{\bf{m}}_i},\forall i, 1\le i\le K$.
 Here, the specific value of vector ${\bf{c}}$,  determined by   the attack behaviors, is defined by:
\begin{equation}
{\bf{c}} = \left\{ {\begin{array}{*{20}{c}}
{{{\left[ {\begin{array}{*{20}{c}}
0& \cdots &0
\end{array}} \right]}^{\rm{T}}}}&{ \rm SC}\\
{{{\left[ {\begin{array}{*{20}{c}}
1& \cdots &1
\end{array}} \right]}^{\rm{T}}}}&{\rm WB\text{-}PJ}\\
{{{\left[ {\begin{array}{*{20}{c}}
0& \cdots &1
\end{array}} \right]}^{\rm{T}}}}&{\rm PB\text{-}PJ}
\end{array}} \right.
\end{equation}

However, the mathematical hints above can not obscure the fact that what Alice actually receives on subcarriers are still superimposed signals, rather than the code digits.

\textbf{Output:}  The superimposed signals which are  stored  to the  data units in the  next layer.

\subsection{Input Layer (\textbf{Feature Extraction})}
\textbf{Input:} The  superimposed signals from the previous layer.

Alice  aims to  extract   features of  superimposed signals  in data units, and encode them  into code digits.
Depending on features involved, those superimposed signals undergo two specific independent processes~(See Fig.~\ref{MLR}), that is, energy feature extraction and independence feature extraction.
 \subsubsection{Energy Feature Extracting by ERED} The  detection units, configured in  this layer,   extract energy feature from  superimposed signals in the form of  codewords, i. e., ${{\bf{b}}_{\rm{I}}}$ and ${{\bf{m}}_{\rm{I}}}$. This process is same with the one shown in Section~\ref{EFE}.  ${{\bf{b}}_{\rm{I}}}$ and ${{\bf{m}}_{\rm{I}}}$   will be  delivered to the digit units configured in the next layer.
 \subsubsection{Independence Feature Extracting by Inner Product Operation}

 The inner product units are configured to extract  the independence feature from  the superimposed signals in the form of code matrix, namely, $\bf D$.  See details in Algorithm~\ref{Alg1}.
 \begin{algorithm}[!t]
\caption{Formulation of  2-D Differential Code Matrix}
\begin{algorithmic}[1]
 \label{Alg1}
\FOR {$i=1$ to $i=B$}
\STATE Select  superimposed signals at  the  $i$-th subcarrier as the reference  input of inner product unit;  Use ERED to determine the $i$-th binary digit.
\STATE Input a total of $B$  superimposed signals  as the other  input of the same inner product unit.
\STATE The  inner product unit  perform inner product operation between the two inputs.   Differential binary code digits of length $B$  are formulated.
\STATE  Perform XOR operation between the $j$-th differential code digit and the $i$-th reference digit (usually 1 if signals exist, otherwise 0.). The result is  $d_{i,j}$.
\ENDFOR
\STATE Because a total of  $B$ reference digits can be specified precisely, a total of $B$ codewords of length $B$  is formulated,  constituting a 2-D differential code matrix defined in Definition 8.
\end{algorithmic}
\end{algorithm}
\begin{definition}
 A 2-D Differential code  is defined by a $B \times B$ matrix ${\bf{D}} = \left[ {{{\bf{d}}_{j \in \left[ {1,B} \right]}}} \right] $ with its $j$-th row ${{\bf{d}}_j} $ denoted by ${{\bf{d}}_j} = {\left[ {\begin{array}{*{20}{c}}
{{d_{1,j}}}& \cdots &{{d_{B,j}}}
\end{array}} \right]}$.
\end{definition}
Thanks to the feature extraction,  the derived $\bf D$ matrix has the potential of including  all the information of codewords employed by  LUs and/or Ava. $\bf D$  can contribute  to the decomposition of  ${{\bf{b}}_{\rm{I}}}$ in the sense that it can facilitate  the precise detection of $\bf{c}$  which is then eliminated from  ${{\bf{b}}_{\rm{I}}}$.

\textbf{Output:}${{\bf{b}}_{\rm{I}}}$,  ${{\bf{m}}_{\rm{I}}}$, and $\bf D$. Those are delivered to the digit units in the next layer.
 \subsection{ Codeword Superposition and Uncertainty Principle}

 It should be stressly noted that  \emph{unpredictable attack behaviors and the corresponding codewords as well as  random utilization of codewords of LUs} could prevent Alice from acquiring  the  specific codewords within  $\bf D$. The direct result is that  Alice will observe that superimposed codewords coexist with the  un-superimposed ones.

 This prompts us  to  discover and formulate the concept of CSUP, a basic and deterministic  rule followed by those random codewords within  $\bf D$  under various attacks. Undoubtedly, CSUP will contribute to the clear understanding  of $\bf D$,  fundamentally facilitating the decomposition of ${{\bf{b}}_{\rm{I}}}$ in the following layers.

  In order to   uncover  the secret of CSUP,  we need to examine  the codeword  superposition process in a smart, explicit and institutive  way. Let us focus on a special stacked matrix $\bf T$.
 \begin{definition}
Imagine  a $\left( {K{\rm{ + 2}}} \right)$-by-$B$ matrix ${\bf T}=\left[ {{t_{j,i}}} \right]$ composed by ${\bf{T}} = {\left[ {\begin{array}{*{20}{c}}
{{{\bf{b}}_{\rm{1}}}}& \cdots &{{{\bf{b}}_K}}&{\bf{c}}&{{{\bf{b}}_{\rm{I}}}}
\end{array}} \right]^{\rm{T}}}$ where ${{\bf{b}}_{{i}}}$ and ${{\bf{b}}_{\rm{I}}}$  satisfy Eq. (11). The  $i$-th column of the submatrix  constituted by the first  $K$ row  is denoted by ${{\bf{t}}_i}$ with  ${{\bf{t}}_i} = {\left[ {\begin{array}{*{20}{c}}
{{t_{1,i}}}& \cdots &{{t_{K,i}}}
\end{array}} \right]^{\rm{T}}}$.  The sum of  elements of ${{\bf{t}}_i}$ is defined by ${t_{s,i}}$ with  ${t_{s,i}} = \sum\limits_{j = 1}^K {{t_{j,i}}} $. The column index  of ${\bf{T}}^{\rm T}$ corresponds to that of $\bf D$, or equivalently the index of  reference digits.
 \end{definition}
 We cannot know  the identities of codewords within ${\bf{T}}$ exactly. In fact,  we may not care the exact identities of codewords,  but instead  concentrate our attention on the column and row property of ${\bf{T}}$ from the following two views:
\begin{fact}[\textbf{CSUP: Un-superimposed Codewords}]
 If ${t_{s,i_{0}}}+t_{K+1,i_{0}}=1$  holds true, or equivalently, ${{{m}}_{{\rm{I}},{i_{0}}}}=1$ holds true,   Alice is perfectly able to  deduce that  the recovered  codeword at the ${\bf d}_{i_0}$, also namely, the \textbf{exposed codeword},  must be un-superimposed.
\end{fact}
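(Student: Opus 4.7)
The plan is to first establish the claimed equivalence and then convert the arithmetic condition into a statement about how many physical signals sit on subcarrier $i_0$, before finally invoking the independence-feature machinery to show that the recovered codeword at row $i_0$ of $\bf D$ is necessarily a single, un-superimposed codeword.

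First I would unpack the equivalence between ${t_{s,i_0}}+t_{K+1,i_0}=1$ and ${m_{{\rm I},i_0}}=1$. By Definition 10, ${t_{s,i_0}}=\sum_{j=1}^K t_{j,i_0}$ counts how many LU codewords in $\bf T$ carry a $1$ at position $i_0$, and $t_{K+1,i_0}=c_{i_0}$ is Ava's binary indicator at the same position. On the other side, ${\bf m}_{\rm I}$ is the ASP sum ${\bf m}_{{\rm S},K}+{\bf c}$ built by the digit-by-digit $+$ of Definition 6. Because each LU's binary codeword ${\bf b}_i$ coincides with its paired $(M{+}1)$-ary codeword ${\bf m}_i$ in the $\{0,1\}$ range, we have ${m}_{{\rm I},i_0}={t_{s,i_0}}+t_{K+1,i_0}$ identically, and the output of ERED in the input layer has certified $m_{{\rm I},i_0}=M_{i_0}$. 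Hence the arithmetic condition and the detection outcome $M_{i_0}=1$ are the same event.

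Next I would translate this into the signal picture. If $M_{i_0}=1$, exactly one of the four cases in Section III-C materialises at subcarrier $i_0$: either one and only one LU $j_0 \in {\cal K}$ activates $i_0$ (so ${{\bf y}_{i_0}[k]}={\bf g}_{j_0,i_0}^{\rm L} x_{{\rm L},j_0}[k]+{\bf w}_{i_0}$) or Ava alone fires there (so ${{\bf y}_{i_0}[k]}={\bf g}_{i_0}^{\rm E} n_{i_0}[k]+{\bf w}_{i_0}$). In either scenario the reference signal that drives the $i_0$-th pass of Algorithm 1 is generated by a single node; no codeword superposition has taken place at that subcarrier yet.

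Finally I would push this through the independence-feature branch to read off ${\bf d}_{i_0}$. For every other subcarrier $j$, the inner product $I_{i_0,j}$ between the normalised reference at $i_0$ and the normalised observation at $j$ concentrates, by the law-of-large-numbers argument following Remark 1, on $1$ when $j$ carries a signal from the same node $j_0$ and on $0$ otherwise; the threshold $r\!\approx\!0.5$ and the encoder $f$ then deliver a clean binary answer, and the XOR-with-reference step in Algorithm 1 flips these into exactly the indicator vector of node $j_0$'s own codeword. Consequently ${\bf d}_{i_0}$ reproduces a single codeword from $[{\bf b}_{j'\in{\cal B}_{j_0}}]$ (or from $\{{\bf c}\}$), which is by definition un-superimposed, giving the claim.

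The main obstacle I anticipate is the last step: one must argue that the ``single-node'' property of the $i_0$-th reference really propagates through the inner-product/threshold/XOR chain without bleed-through from the other $M_j$ signals that may coexist at subcarriers $j\neq i_0$. This reduces to checking that cross-node inner products are suppressed uniformly in the other $M_j-1$ contaminants at $j$, which is where the $N_{\rm T}$-scaling justification behind Remark 1 and Definition 7 has to be cited carefully; everything else is bookkeeping over Definitions 6, 8 and 10.
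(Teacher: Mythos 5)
Your argument is sound and is essentially the reasoning the paper intends but never writes down: the paper states this Fact without proof and only illustrates it via the examples in Fig.~5, so your three-step chain (arithmetic equivalence $m_{{\rm I},i_0}=t_{s,i_0}+t_{K+1,i_0}$ via ${\bf b}_i={\bf m}_i$ and the digit-by-digit ASP sum; translation of $m_{{\rm I},i_0}=1$ into ``exactly one node fires on subcarrier $i_0$'' through the ERED output $M_{i_0}$; propagation of the single-node reference through the inner-product/threshold/XOR chain of Algorithm~1) is a faithful formalization rather than a divergent route. The one point worth making explicit is the caveat you already flag: the claim that ${\bf d}_{i_0}$ reproduces node $j_0$'s codeword \emph{exactly} requires that at every other subcarrier $j$ where $j_0$ is active but shares the subcarrier with $M_j-1$ other nodes, the normalized inner product $I_{i_0,j}$ still clears the threshold $r$. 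Under equal received powers this quantity behaves like $1/\sqrt{M_j}$ rather than $1$, so for $M_j\ge 5$ it can fall below $r=0.5$ and a digit of the recovered codeword would flip; the Fact therefore holds only under the paper's implicit idealizations (perfect ERED and asymptotic concentration of the cross-node inner products as $N_{\rm T}$ grows, per Remark~1). Since the paper itself assumes these idealizations throughout, your proof is correct at the paper's level of rigor, and your closing paragraph correctly isolates the only step that would need a quantitative $N_{\rm T}$-dependent bound to be made fully rigorous.
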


\begin{fact}[\textbf{CSUP: Superimposed Codewords}]
For any column $i_{0}$ satisfying $t_{s,i_{0}}= m \ge  2$,   a total of $m$ codewords  are surely superimposed together at this  column.  Only the superposition version of  $m$ codewords is enabled  to be precisely recovered (or namely exposed)  and  equal to   ${\bf d}_{i_0}$.
\end{fact}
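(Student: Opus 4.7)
The plan is to decompose Fact~3 into its two assertions --- (a) that $m$ codewords are superimposed at column $i_0$, and (b) that $\mathbf{d}_{i_0}$ coincides with, and is the only reliably recoverable object at, that column --- and handle them in turn. I would begin by fixing notation: let $\mathcal{M}_0 = \{l \in \mathcal{K} : b_{l,i_0} = 1\}$, so that by construction of $\mathbf{T}$ the hypothesis $t_{s,i_0} = m$ is equivalent to $|\mathcal{M}_0| = m$.

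For (a), I would simply translate the code--frequency correspondence set up by Procedure~1: a digit $b_{l,i_0} = 1$ is by definition an instruction for LU $l$ to activate subcarrier $i_0$. Hence the $m$ LUs in $\mathcal{M}_0$ all emit pilots on subcarrier $i_0$, and their codewords $\{\mathbf{b}_l\}_{l\in\mathcal{M}_0}$ superimpose there, regardless of what the remaining $K-m$ LUs (or the attacker) do elsewhere. No further argument is needed beyond unwinding definitions.

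For (b), I would invoke Algorithm~1 with subcarrier $i_0$ fixed as the reference, together with the property of the independence-feature extractor discussed after Definition~6: the normalized inner product $I_{i_0,j}$ concentrates so that $f(I_{i_0,j})$ becomes $1$ when subcarrier $j$ carries a signal component from the same source set active at $i_0$, and $0$ otherwise. This collapses each entry $d_{i_0,j}$ to a Boolean indicator that subcarrier $j$ is ``touched'' by at least one LU from $\mathcal{M}_0$, which is precisely the $j$-th digit of the SP sum $\bigvee_{l\in\mathcal{M}_0}\mathbf{b}_l$. Consequently $\mathbf{d}_{i_0} = \bigvee_{l\in\mathcal{M}_0}\mathbf{b}_l$. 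To close the ``only'' clause, I would observe that the map $\mathcal{M}_0 \mapsto \bigvee_{l\in\mathcal{M}_0}\mathbf{b}_l$ is in general many-to-one: distinct $m$-subsets of $\mathcal{K}$ may produce identical SP sums, so the individual constituents of $\mathcal{M}_0$ are not recoverable from the differential column alone. This is the uncertainty statement alluded to before the fact.

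The main obstacle I expect is rigorously justifying the inner-product discrimination used in (b). The LLN-type intuition from the Definition~6 discussion --- same-source pairs concentrate near $1$, independent-source pairs near $0$ --- must be promoted into a precise high-$N_{\mathrm T}$ concentration result that remains valid when subcarrier $i_0$ already bears a mixture of $m \ge 2$ independent channel vectors and when the attacker's codeword $\mathbf{c}$ may additionally carry a $1$ at $i_0$ or $j$. A case analysis across the SC, WB-PJ, and PB-PJ modes of the Attack Model is required to show that $\mathbf{c}$'s contribution, driven by a channel independent of all LU channels, cannot flip the $f$-threshold decision and hence does not pollute the identity $\mathbf{d}_{i_0}=\bigvee_{l\in\mathcal{M}_0}\mathbf{b}_l$. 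Once this concentration lemma is in hand, the combinatorial collapse to the Boolean SP sum is immediate, and the non-injectivity of the subset-to-SP-sum map closes the ``only'' part of the claim.
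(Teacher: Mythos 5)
The paper never actually proves Fact~3: it is asserted as a direct consequence of the construction of $\mathbf{T}$ and $\mathbf{D}$ and is only illustrated by the examples in Fig.~5. Your part (a) --- unwinding the code--frequency correspondence so that the $m$ LUs with $b_{l,i_0}=1$ all emit on subcarrier $i_0$ and their codewords superimpose there --- is exactly the level of argument the paper intends, and your identification $\mathbf{d}_{i_0}=\bigvee_{l\in\mathcal{M}_0}\mathbf{b}_l$ via the inner-product discriminator is the right reading of Definition~6 and Algorithm~1.

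Two steps in your part (b) are genuinely wrong, however. First, your justification of the ``only'' clause --- that distinct $m$-subsets of $\mathcal{K}$ may produce identical SP sums, so the constituents are unrecoverable --- contradicts the code's own UD property (Fact~1): for a constant-weight H2DF-$\left(K,1,B\right)$ code, the map from $k\le K$-subsets to SP sums is injective by construction. The reason only the superposition is ``exposed'' is physical, not combinatorial: with a mixed reference at $i_0$, the inner product against subcarrier $j$ can only indicate that $j$ carries energy from \emph{some} source active at $i_0$; it cannot attribute that energy to a particular LU, so the column of $\mathbf{D}$ is the Boolean OR and nothing finer. (Decomposing that OR using the codebook and the UD property is exactly what the paper defers to the supervision layer.) Second, your plan to show that Ava's contribution ``cannot flip the $f$-threshold decision'' is unachievable in the case $c_{i_0}=1$: her signal is then part of the reference mixture and correlates strongly with her own signal on every other subcarrier she activates, so the measured row is $\bigvee_{l\in\mathcal{M}_0}\mathbf{b}_l\vee\mathbf{c}$, not $\bigvee_{l\in\mathcal{M}_0}\mathbf{b}_l$. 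The Fact survives only if, in that case, $\mathbf{c}$ is counted among the $m$ superimposed codewords (just as Fact~2's condition $t_{s,i_0}+t_{K+1,i_0}=1$ includes the attacker row); reading $m=t_{s,i_0}$ as counting LUs only, as you do, while simultaneously trying to prove that Ava never pollutes the identity, is internally inconsistent.
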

CSUP describes the potential  structures of  $\bf D$  under hybrid  attacks comprehensively.   Two typical examples demonstrating  CSUP could be respectively  depicted in Fig.~\ref{H2DF-D-coding1}.  Keep in mind that CSUP provides  a very basic background and technical support for solving the issues in remainder of subsections. It causes us to direct  our attention to the exposed codewords which are however instable to appear.  Therefore, we could sufficiently believe that the following process of codeword classification and identification (CI) could be seriously destabilized.  This calls for more advanced and delicate mechanism to further optimize the decoding process, which will be depicted in the following layers.
\subsection{Hidden Layer  (\textbf{Attack Identification})}
Backing to the hidden layer, we  stress that  this layer is different from the previous layers and begins to gradually resolve the attack related  issues.

 \textbf{Input:} $\bf D$,   ${{\bf{b}}_{\rm{I}}}$,  ${{\bf{m}}_{\rm{I}}}$, ${{\bf{B}}_{K}}$ and ${{\bf{M}}_{K}}$ in the storage units.

 As shown in Fig.~\ref{MLR}, those inputs  are fed to the identification units which are responsible for two tasks, that is,  1)  precisely identify the current  attack mode (SC, WB-PJ or PB-PJ); 2) differentiate the observed codeword  in the current  mode from  those codewords under potential interfering  modes.
\subsubsection{{Identification of  WB-PJ  Attack}}
CSUP tells us that  the codewords under WB-PJ attack is very distinctive compared with those under other attacks. The principle can be summarized as follows:
\begin{proposition}
  When zero digit does not exist  in the inputs of ${{\bf{b}}_{\rm{I}}}$  and $\bf{D}$, WB-PJ attack happens. This unique characteristic  differentiates  WB-PJ attack from other two attacks, i.e., SC and PB-PJ attack.
\end{proposition}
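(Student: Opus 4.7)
The plan is to prove the proposition in two directions, since the statement really asserts an ``if and only if'' characterization: WB-PJ is the \emph{unique} attack mode that leaves both ${\bf b}_{\rm I}$ and $\bf D$ free of zero digits. So I first show that WB-PJ implies no-zero in both inputs, and then show that under SC or PB-PJ at least one zero digit is forced.

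For the forward direction, I would invoke the attack model in Problem~1: under WB-PJ, Ava activates every available subcarrier, so ${\bf c} = [1,\ldots,1]^{\rm T}$. The SP superposition ${\bf b}_{\rm I} = {\bf b}_{{\rm S},K} \bigvee {\bf c}$ then forces every entry of ${\bf b}_{\rm I}$ to $1$, regardless of which LU codewords were drawn. For $\bf D$, note that Ava's signal is present on \emph{every} single subcarrier, so the ERED unit at each position reports at least one signal and every reference digit in Algorithm~\ref{Alg1} equals $1$. Combined with the XOR in Step~5 and the per-word independence-aided differential coding in Definition~7 (condition 3), every position pair $(i,j)$ produces $d_{i,j}=1$: because Ava contributes a common, non-vanishing component at every subcarrier, the relevant inner-product statistics never collapse to the ``no-signal'' branch of $f(\cdot)$ that would yield a zero after the XOR.

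For the converse, I would treat SC and PB-PJ separately. Under SC, ${\bf c} = [0,\ldots,0]^{\rm T}$, hence ${\bf b}_{\rm I} = {\bf b}_{{\rm S},K}$; by the cover-free property of the H2DF-$(K,1,B)$ code (Fact~1, column optimization in Definition~7), the Boolean sum of $K$ legitimate codewords cannot cover every row position simultaneously, so at least one coordinate of ${\bf b}_{\rm I}$ is zero. Under PB-PJ, ${\bf c}$ itself contains at least one zero by definition; for any subcarrier index $i_0$ where $c_{i_0}=0$ and simultaneously no LU activates $i_0$ (which is guaranteed to exist for the same cover-free reason), ${\bf b}_{{\rm I},i_0}=0$, giving a zero in ${\bf b}_{\rm I}$. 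When this corner fails---i.e.\ when LU activations accidentally cover all Ava-silent positions in ${\bf b}_{\rm I}$---I would instead use $\bf D$: at an Ava-silent subcarrier, the received signal composition differs from that of Ava-active subcarriers, and the independence feature (law-of-large-numbers argument cited after Remark~1) forces the corresponding inner product toward zero, so $d_{i,j}$ takes value $0$ in at least one entry of $\bf D$.

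The main obstacle I anticipate is the PB-PJ sub-case where the random LU codewords conspire to mask Ava's silent positions in ${\bf b}_{\rm I}$; handling this cleanly requires carefully combining the column optimization property of the H2DF code with the independence-feature guarantee on $\bf D$ established in Algorithm~\ref{Alg1}, so that whenever ${\bf b}_{\rm I}$ fails to expose a zero the matrix $\bf D$ always does. Once this complementary coverage is argued, the proposition follows immediately: no-zero-in-both is \emph{equivalent} to ${\bf c}=[1,\ldots,1]^{\rm T}$, which is precisely WB-PJ.
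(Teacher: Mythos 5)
The paper states this proposition without a formal proof (it leans on the CSUP observations in Facts~2 and~3 and the example of Fig.~5), so your two-direction reconstruction --- WB-PJ forces all-ones in both ${\bf b}_{\rm I}$ and $\bf D$, while SC and PB-PJ each force at least one zero somewhere --- is the right way to make the claim precise, and your SC case (cover-freeness with $C>K$ guarantees a row where all $K$ LU codewords vanish, hence a zero in ${\bf b}_{\rm I}={\bf b}_{{\rm S},K}$) is correct.

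There is, however, a genuine gap in your PB-PJ case. First, the existence of an index $i_0$ with $c_{i_0}=0$ \emph{and} all LUs silent is \emph{not} ``guaranteed for the same cover-free reason'': the cover-free property constrains only the LU columns of $\bf B$, whereas ${\bf c}$ is arbitrary and Ava can deliberately place her ones exactly on the rows where ${\bf b}_{{\rm S},K}$ is zero, making ${\bf b}_{\rm I}$ all-ones. Second, in that residual case your appeal to $\bf D$ is too loose: ``the received signal composition differs'' between an Ava-silent and an Ava-active subcarrier does not force $d_{i,j}=0$, because the differential digit is $0$ only when the two subcarriers are occupied by \emph{disjoint} sets of nodes (a pair such as $\{{\rm Ava},{\rm LU}_1\}$ versus $\{{\rm LU}_1\}$ differs in composition yet shares ${\rm LU}_1$, so the inner product need not collapse). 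The fix is to name the pair explicitly: when ${\bf b}_{\rm I}$ is all-ones under PB-PJ, the cover-free row $i_0$ where every LU is silent must carry \emph{only} Ava's signal, and any row $j_0$ with $c_{j_0}=0$ (which exists since ${\bf c}\neq\mathbf{1}$) must carry \emph{only} LU signals; these two node sets are disjoint, so $d_{i_0,j_0}=0$ and $\bf D$ exposes the zero. With that pair pinned down your argument closes, and the forward direction is consistent with the paper's idealization that Ava's common component on every subcarrier keeps all relevant inner products above the threshold $r$.
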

\begin{algorithm}[!t]
\caption{Codeword CI Under WB-PJ and SC Attack}
\begin{algorithmic}[1]
\label{Algg2}
\STATE (For WB-PJ attack) Derive a novel paired H2DF codeword  by performing the ASP sum between  ${{\bf{m}}_{\rm{I}}}$  and  a vector with all digit given``-1'' value;
\STATE Compare this codeword with each column of matrix ${{\bf{M}}_K}$ and find the index of the column equal to this codeword exactly;
\STATE  Back to the matrix ${{\bf{B}}_K}$ and find the superimposed codeword ${{\bf{b}}_{{\rm{S}},K}}$ at the same column index.
\STATE (For both WB-PJ attack and SC) Decompose  ${{\bf{b}}_{{\rm{S}},K}}$  into  ${{\bf{b}}_{i}},\forall i, 1\le i \le K$ based on the column-wise cover free coding characteristic.
\end{algorithmic}
\end{algorithm}
\subsubsection{Identification of  SC and PB-PJ  Attack}
 Theoretically, we need to examine the existence of  $\bf{c}$ and  further clarify the superposition relationship between ${{\bf{b}}_{{\rm{S}},K}}$ and  $\bf{c}$ such that we can directly identify SC and PB-PJ  attack.  However,  what Alice observes  in practice is  the their superposition version, i.e., ${{\bf{b}}_{\rm{I}}}$ and ${{\bf{m}}_{\rm{I}}}$, rather than ${{\bf{b}}_{{\rm{S}},K}}$ and  $\bf{c}$.  We stress that ${{\bf{B}}_{K}}$ and ${{\bf{M}}_{K}}$ make the identification of attack more easily.
\begin{proposition}
When  ${{\bf{b}}_{\rm{I}}}\notin {\overline{\cal{B}}}_{K}$  holds true, there exists ${{\bf{b}}_{\rm{I}}}\ne{{\bf{b}}_{{\rm{S}},K}}$, which means that $\bf{c}$  cannot be covered by ${{\bf{b}}_{{\rm{S}},K}}$.  In this case,  there   must exist one column indexed by $i_{a}$, $i_{a} \in {\cal D}$  such that $c_{i_{a}}$, i.e., $t_{K+1,i_{a}}$ in $\bf T$, is equal to 1,  with  ${t_{s,i_{a}}} =0 $.  The attack is classified as  PB-PJ attack or WB-PJ attack. We can  surely differentiate   the PB-PJ attack from WB-PJ attack where   all the inputs of $\cal D$  are uniquely non-zero digits.
\end{proposition}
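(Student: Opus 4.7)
The plan is to chain three short implications and then peel off the attack-mode classification. First, I would argue from the definitions: since ${{\bf{b}}_{{\rm S},K}}$ is itself an SP-sum of $K$ codewords drawn (one per LU) from the columns of $\bf B$, it lies in ${\overline{\cal B}}_K$ by construction. Hence the hypothesis ${{\bf{b}}_{\rm I}}\notin{\overline{\cal B}}_K$ immediately yields ${{\bf{b}}_{\rm I}}\ne{{\bf{b}}_{{\rm S},K}}$. Next, using the covering relation from Definition 6, if $\bf c$ were covered by ${{\bf{b}}_{{\rm S},K}}$, the SP-sum ${{\bf{b}}_{{\rm S},K}}\vee{\bf c}$ would collapse to ${{\bf{b}}_{{\rm S},K}}$ itself, contradicting ${{\bf{b}}_{\rm I}}\ne{{\bf{b}}_{{\rm S},K}}$. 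Therefore $\bf c$ cannot be covered by ${{\bf{b}}_{{\rm S},K}}$.

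Second, I would exhibit the required column $i_a$. The non-covering relation is equivalent, digit-by-digit under the SP rule of Eq. (7), to the existence of some index $i_a\in\{1,\ldots,B\}$ at which $c_{i_a}=1$ while $b_{{\rm S},K,i_a}=0$. Translating into the stacked matrix $\bf T$ of Definition 9, this reads $t_{K+1,i_a}=1$ and $t_{s,i_a}=\sum_{j=1}^{K}t_{j,i_a}=0$. Counting signals at subcarrier $i_a$, exactly one signal is present (the attacker's), so the ERED-based $(M{+}1)$-ary reading in Eq. (11) yields $m_{{\rm I},i_a}=1$, placing $i_a$ in $\cal D$ by the very definition of $\cal D$ in the notation table.

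Third, I would classify the attack. Because $\bf c$ has at least the entry $c_{i_a}=1$, the SC case (where $\bf c$ is the all-zero vector by Eq. (13)) is excluded, leaving WB-PJ and PB-PJ as the only admissible modes. To separate these two, I invoke Proposition 1: WB-PJ is precisely the regime in which no zero digit appears either in ${{\bf{b}}_{\rm I}}$ or in the differential matrix $\bf D$, because the attacker's all-ones codeword saturates every subcarrier. If any zero digit is observed, WB-PJ is inconsistent and only PB-PJ remains; otherwise WB-PJ holds.

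The main obstacle is bookkeeping rather than mathematical depth: one must keep straight the two parallel encodings attached to each subcarrier — the Boolean SP-sum ${{\bf{b}}_{\rm I}}$ and the arithmetic ASP-sum ${{\bf{m}}_{\rm I}}$ — and exploit the second one (namely ${{\bf{m}}_{\rm I}}$), because it is the only one carrying the cardinality information needed to certify $i_a\in\cal D$. The Boolean reading $b_{{\rm I},i_a}=1$ alone is consistent with any number of superposed signals and therefore cannot by itself isolate a lone-attacker subcarrier; only $m_{{\rm I},i_a}=1$ does, and this is what makes the test implementable at Alice.
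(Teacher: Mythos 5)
Your proof is correct and follows essentially the same route as the paper, which disposes of the key step in one contrapositive sentence (``if there were no column with $t_{s,i}=0$ and $c_i=1$, then $\mathbf{c}$ would be covered by ${\bf b}_{{\rm S},K}$ and ${\bf b}_{\rm I}={\bf b}_{{\rm S},K}$''). You simply run the argument in the forward direction and add the useful explicit checks the paper omits---that ${\bf b}_{{\rm S},K}\in\overline{\mathcal{B}}_K$ by construction, and that the lone-attacker column $i_a$ lands in $\mathcal{D}$ because $m_{{\rm I},i_a}=1$.
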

The proof is easy since  if otherwise  there does not exist ${t_{s,i}} = 0 $ \,$\forall i$,   then  $\bf{c}$ must  be covered by ${{\bf{b}}_{{\rm{S}},K}}$ and   ${{\bf{b}}_{\rm{I}}}={{\bf{b}}_{{\rm{S}},K}}$.
\begin{proposition}
When  ${{\bf{b}}_{\rm{I}}}\in{\overline{\cal{B}}}_{K}$ holds true, there exists ${{\bf{b}}_{\rm{I}}}={{\bf{b}}_{{\rm{S}},K}}$ and $\bf{c}$  is covered by ${{\bf{b}}_{{\rm{S}},K}}$. On this basis, SC occurs if ${{\bf{m}}_{\rm{I}}} ={{\bf{m}}_{{\rm{S,}}K}}$, and otherwise,  PB-PJ attack happens.
\end{proposition}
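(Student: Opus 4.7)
The plan is to split the claim into two independent parts: a structural part establishing that ${{\bf b}_{\rm I}}={{\bf b}_{{\rm S},K}}$ together with the covering relation ${\bf c}$ is covered by ${{\bf b}_{{\rm S},K}}$, and a detection part that uses the ASP sum ${{\bf m}_{\rm I}}$ to separate SC from PB-PJ once we are in this covered regime. The structural part is where almost all the content sits; the detection part is then immediate from the definition of the ASP sum.

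For the structural part, I would begin from the relation ${{\bf b}_{\rm I}}={{\bf b}_{{\rm S},K}}\bigvee{\bf c}$, which makes ${{\bf b}_{\rm I}}$ automatically cover every LU codeword ${\bf b}_1,\ldots,{\bf b}_K$. The hypothesis ${{\bf b}_{\rm I}}\in\overline{\cal B}_K$ means that ${{\bf b}_{\rm I}}$ can be written as the SP sum of some $K$-subset $\mathcal{S}'\subseteq\{{\bf b}_1,\ldots,{\bf b}_C\}$ of columns of $\bf B$. The key step is then to invoke the column-wise cover-free property of the H2DF-$(K,1,B)$ code (Definition 6, Condition 2, also recalled in Fact 1): the SP sum of any $K$-subset cannot cover a codeword lying outside that subset. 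Suppose for contradiction that some LU codeword ${\bf b}_i\notin\mathcal{S}'$; since ${{\bf b}_{\rm I}}=\bigvee_{{\bf b}\in\mathcal{S}'}{\bf b}$ covers ${\bf b}_i$, this contradicts cover-freeness. Hence $\{{\bf b}_1,\ldots,{\bf b}_K\}\subseteq\mathcal{S}'$, and a cardinality argument forces equality, giving ${{\bf b}_{\rm I}}={{\bf b}_{{\rm S},K}}$. Substituting back into ${{\bf b}_{{\rm S},K}}\bigvee{\bf c}={{\bf b}_{{\rm S},K}}$ yields that every $1$-position of ${\bf c}$ is already a $1$-position of ${{\bf b}_{{\rm S},K}}$, i.e., ${\bf c}$ is covered by ${{\bf b}_{{\rm S},K}}$.

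For the detection part, I would use ${{\bf m}_{\rm I}}={{\bf m}_{{\rm S},K}}+{\bf c}$ in the sense of the ASP (digit-by-digit) sum of Definition 5. If SC occurs, then ${\bf c}=\mathbf{0}$ and the equality ${{\bf m}_{\rm I}}={{\bf m}_{{\rm S},K}}$ holds coordinatewise. Conversely, under the hypothesis that ${{\bf b}_{\rm I}}\in\overline{\cal B}_K$, Proposition 1 has already eliminated WB-PJ (which forces ${{\bf b}_{\rm I}}$ and $\bf D$ to contain no zero digits), so the only remaining non-SC alternative is PB-PJ, for which ${\bf c}\neq\mathbf{0}$. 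Then some coordinate $c_i=1$, and at that coordinate ${{\bf m}_{\rm I}}$ strictly exceeds ${{\bf m}_{{\rm S},K}}$, so the two vectors differ. In practice Alice recovers ${{\bf m}_{{\rm S},K}}$ by locating the unique column of ${\bf B}_K$ equal to the observed ${{\bf b}_{\rm I}}$ and reading off the matching column of ${\bf M}_K$; this lookup is well-defined precisely because the SP decomposition exploited in the structural part is unique.

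The main obstacle I anticipate is articulating the contradiction in the structural step cleanly: the temptation is to apply uniqueness to ${{\bf b}_{{\rm S},K}}$, but the cover-free property must instead be applied to the hypothetical decomposition $\mathcal{S}'$ of ${{\bf b}_{\rm I}}$, ruling out the possibility that the ``extra support'' contributed by $\bf c$ is absorbed by swapping in a different column of $\bf B$ for one of the LU codewords. Once this is pinned down, the covering conclusion and the ASP-based discriminator in the detection step both follow in one line each.
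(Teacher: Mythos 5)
Your proof is correct and follows the same route the paper implicitly relies on: the cover-free (UD) property of the H2DF-$(K,1,B)$ code forces the unique $K$-subset decomposing ${\bf b}_{\rm I}\in\overline{\cal B}_K$ to be exactly the LU codewords, whence ${\bf b}_{\rm I}={\bf b}_{{\rm S},K}$ and the covering of $\bf c$, and the ASP sum then separates SC from PB-PJ. In fact you spell out the contradiction step (applying cover-freeness to the hypothetical decomposition $\mathcal{S}'$ rather than to ${\bf b}_{{\rm S},K}$) more carefully than the paper, which essentially asserts the result from Fact 1.
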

In practice, we should note that the above theoretical results  depend on  how to  determine  the relationship between  ${{\bf{b}}_{\rm{I}}}$ ( ${{\bf{m}}_{\rm{I}}}$ ) and  ${{\bf{B}}_K}$ (${{\bf{M}}_K}$).  In order to achieve this,  Alice can search ${{\bf{B}}_K}$ (${{\bf{M}}_K}$), find whether ${{\bf{b}}_{\rm{I}}}\in{\overline{\cal{B}}}_{K}$ or ${{\bf{m}}_{\rm{I}}}\in{\overline{\cal{M}}}_{K}$ holds true  and further  make  decisions shown in above theorems.
The overall algorithm  can be shown in Fig.~\ref{figure-Classification}.

\textbf{Output:} Identified attack mode and the superimposed codeword under the identified mode.

\begin{figure}[!t]
\centering \vspace{-10pt}\includegraphics[width=0.75\linewidth]{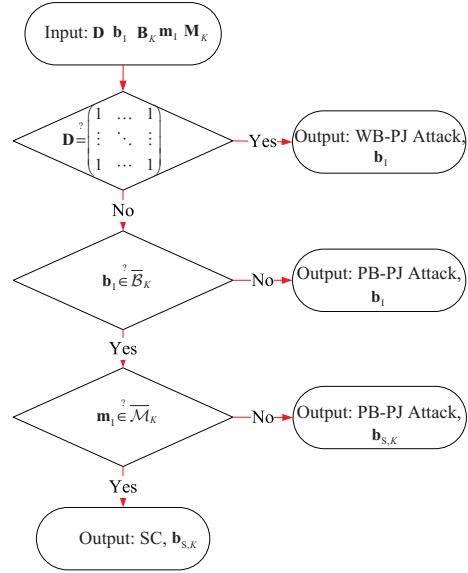}
\caption{Flow chart of attack identification.}
\label{figure-Classification}
\end{figure}

\begin{algorithm}[!t]
\caption{BFPI Algorithm, also see Fig.~\ref{FCDS}}
\begin{algorithmic}[1]
\label{Alg_3}
\STATE Formulate two hypotheses as follows:
\begin{equation}
\begin{array}{l}
{{\cal H}_0}: {\bf b}_{K,{j}_{0}}\to {\rm LUs},{\bf b}_{K,{j}_{1}} \to {\rm Ava}\\
{{\cal H}_1}:{\bf b}_{K,{j}_{1}}\to {\rm LUs},{\bf b}_{K,{j}_{0}} \to {\rm Ava}
\end{array}
\end{equation}
\STATE (\textbf{Backward Propagation (BP)}) For ${{\cal H}_0}$,   ${\bf b}_{K,{j}_{0}}$  and ${\bf b}_{K,{j}_{1}}$ can both  be decomposed into original codewords in  $\bf B$, i.e., ${\left\{ {{{\overline {\bf{b}} }_{i,0}}} \right\}_{i = 1, \ldots K}}$ and ${\left\{ {{{\overline {\bf{b}} }_{i,1}}} \right\}_{i = 1, \ldots K}}$.
\STATE (\textbf{Forward Propagation (FP)})  According to the definition of H2DF code, any two different superimposed codewords can be decomposed into  two different codeword sets.  Therefore, the ASP sum must be different.  Calculate the ASP sum  as ${ {\overline{ \bf{b}}} _{{\rm{S}},0}} = {{\overline {\bf{b}}} _{1,0}} +  \cdots  + { {\overline {\bf{b}}} _{K,0}}+{\bf b}_{K,{j}_{1}}$.
\STATE  (\textbf{Decision})   If  ${{\overline{\bf{b}}} _{{\rm{S}},0}}$ is equal to the observation, then we know that  ${{\cal H}_0}$ is valid, otherwise, ${{\cal H}_1}$ is valid.
\end{algorithmic}
\end{algorithm}
\subsection{Supervision Layer (\textbf{Codeword CI})}
\textbf{Input:}  the superimposed codeword, i.e.,  ${\bf{b}}_{\rm I}$ or ${\bf{b}}_{{\rm S}, K}$, under the identified attack mode,   ${{\bf{B}}_K}$ and ${{\bf{M}}_K}$  in  the storage units.

The CI units are responsible for  decomposing ${\bf{b}}_{\rm I}$ or ${\bf{b}}_{{\rm S}, K}$, and  executing the following CI  task for the  derived codewords.

For WB-PJ and SC attack,  decomposing ${\bf{b}}_{\rm I}$ or ${\bf{b}}_{{\rm S}, K}$ could be done in Algorithm~\ref{Algg2}.

For PB-PJ attack, the flexible choices of  $\bf{c} $ could cause two  results. 1)~When  ${{\bf{b}}_{\rm{I}}}={{\bf{b}}_{{\rm{S}},K}}$,  $\bf{c} $ is  covered by ${{\bf{b}}_{{\rm{S}},K}}$. This issue  of codeword decomposition and codeword CI  could be  resolved using Algorithm~\ref{Algg2}. 2)~Otherwise when ${{\bf{b}}_{\rm{I}}}\ne{{\bf{b}}_{{\rm{S}},K}}$, $\bf{c} $ is not covered by ${{\bf{b}}_{{\rm{S}},K}}$. The situation is rather  complicate.

In principle, the precise decomposition of  ${{\bf{b}}_{\rm{I}}}$ into ${{\bf{b}}_{{\rm{S}},K}}$ is guaranteed iff  the exposed codeword  ${{{\bf d}}_{{i}\in {\cal D}}}$ (See its definition in Fact 2) is precisely identified as $\bf{c}$ which is then  eliminated exactly from ${{\bf{b}}_{\rm{I}}}$.
The challenge lies in how to identify the identities of  ${{{\bf d}}_{{i}\in {\cal D}}}$ which are usually indeterministic due to the random utilization of codewords.
\begin{problem}[Indeterministic Exposed Codewords]
When PB-PJ attack happens and  ${{\bf{b}}_{\rm{I}}}\ne{{\bf{b}}_{{\rm{S}},K}}$ exists,  the identities of  elements in  $ {\cal D}$ are random and unpredictable   due to the random codewords employed by  the overall  $K+1$ nodes.   Therefore the identities of ${{{\bf d}}_{{i}\in {\cal D}}}$  are  unknown.  Any prior constraints that are imposed on the set for resolving this issue will also  break down the randomness of original codewords and must  not be considered.
\end{problem}
 \begin{figure}[!t]
\centering \includegraphics[width=1.0\linewidth]{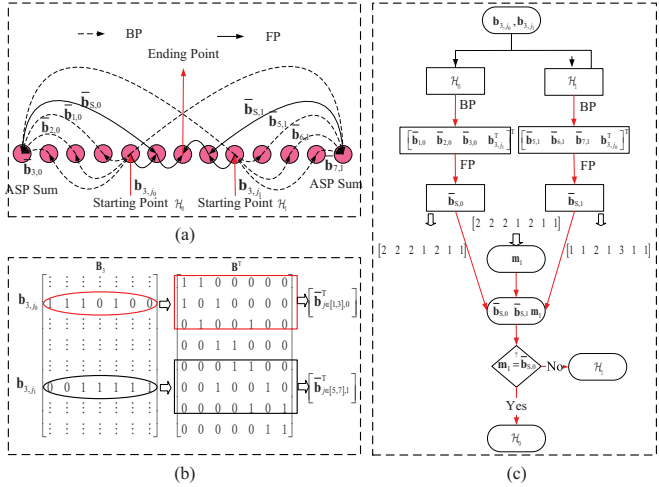}
\caption{Example  of BFPI Algorithm in  the supervision layer, for identifying the confused codewords ${\bf b}_{K,{j}_{0}}$  and ${\bf b}_{K,{j}_{1}}$.  (a) Graph  of CI units  under  BFPI Algorithm  for PB-PJ attack; (b) Codeword decomposition for the confused codewords ${\bf b}_{K,{j}_{0}}$  and ${\bf b}_{K,{j}_{1}}$ ; (c) Flow chart of  BFPI Algorithm.
.}\label{FCDS}
\end{figure}
Institutively,  the precise decomposition of  ${{\bf{b}}_{\rm{I}}}$ into ${{\bf{b}}_{{\rm{S}},K}}$  is out of the question.
However,  if we  exhaust all possibilities  of $\bf{c}$  which actually could be any codeword, we find that  the codeword CI is   intrinsically hindered by two types of confusions.

1) The identification confusion caused by $\bf{c}$ when ${{\bf{c}}}\in{\overline{\cal{B}}}_{K}$ holds true. In this case, there must exist  $i_{a}$ such that ${\bf{d}}_{i_{a}}=\bf{c}$. Besides, we know  that ${{\bf{b}}_{{\rm{S}},K}}\in{\overline{\cal{B}}}_{K}$ always holds true. Therefore, there exists the following problem:
\begin{problem}
 A superposition identification (SPI)  problem  happens, that is,  $\bf{c}$  cannot  be differentiated from  ${{\bf{b}}_{{\rm{S}},K}}$ since they are both in the same matrix ${\bf B}_{K}$.
\end{problem}
Note that the case where  $\bf{c}={{\bf{b}}_{{\rm{S}},K}}$ does not affect the ultimate decoding of codewords. Without loss of generality, we define  the exact occurrence   of $\bf{c}$ and ${{\bf{b}}_{{\rm{S}},K}}$   in ${\bf B}_{K}$   by ${{\bf{b}}_{{{K}},j_{0}}}$ and ${{\bf{b}}_{{{K}},j_{1}}}$. The identities of ${{\bf{b}}_{{{K}},j_{0}}}$ and ${{\bf{b}}_{{{K}},j_{1}}}$  are both unknown in practice. To completely resolve the above  problem, we develop the technique of \textbf{Back/Forward Propagation Identification (BFPI)}. The details  are given in Algorithm~\ref{Alg_3} and an example of BFPI for 7-digit codewords under $K=3$ LUs  can be seen in the Fig.~\ref{FCDS}.
\begin{algorithm}[!t]
\caption{ Codeword CI}
 \begin{algorithmic}[1]
 \label{Alg_6}
\STATE Search the set ${{\cal D}}$ and derive  ${{\bf{d}}_{{i}}}$,  ${i} \in {\cal D}$. Those codewords include the one from Eva and those from LUs.
\STATE  Alice  select an exposed  digit with  position ${i_d} \in {\cal D}$ and then configure the ${i_d}$-th digit of  ${{\bf{b}}_{\rm I}}$ to be zero. ${{\bf{b}}_{i}},\forall i, 1\le i \le K$   is  recovered by  searching ${\bf B}_{K}$ for the codeword  identical to the revised  ${{\bf{b}}_{\rm I}}$.
\STATE Alice examines  the  overall distribution of ${{\bf{b}}_{i}},\forall i, 1\le i \le K$ in $\bf B$ and  determines the selected digit  ${i_d}$ belonging to the true  codeword of Ava only when the  distribution satisfies MUCD.  According to this  digit, then ${\bf c}={{\bf{d}}_{{i_d}}}$. Then  Alice can finally confirm ${{\bf{b}}_{i}},\forall i, 1\le i \le K$ in this case to be the right codewords  from LUs.
 \end{algorithmic}
\end{algorithm}

2) The identification confusion caused by $\bf{c}$ when $\bf{c}$  belongs to the codebook $\bf B$, that is, ${\bf{c}}\in \bigcup\limits_{i = 1}^K {{{{\cal B}}_i}} $.  More specifically, $\bf{c}$ could be located in  any  uncertain submatrix, or namely,  $\bf{c}$  contaminates one  submatrix. There exists ${\bf{c}}  \in {\left\{ {{{ {\bf{d}} }_{i}}} \right\}_{i \in \cal D}}$.
  \begin{definition}[Multi-User Codeword Distribution (MUCD)]
 MUCD means that  there always exists a unique codeword  in use for  each submatrix $\left[ {{{\bf{b}}_{j \in {{\cal B}_{i}}}}} \right]$.
\end{definition}
  However, the indeterministic  relationship between ${\bf{c}}$  and exposed codeword ${\bf d}_{i}$  causes  a random disturbance (RD) problem in the recovery operation
\begin{problem}
\textbf{Confusing case:} when  there exist $j_{1}$ and a set ${\cal D}_{0}\subseteq {\cal D}, {\cal D}_{0}\ne \emptyset $ such that  $\forall i, i \in {\cal D}_{0}$, ${\bf d}_{i}$ and ${\bf{c}}$ are located  within the same submatrix $\left[ {{{\bf{b}}_{j \in {{\cal B}_{j_{1}}}}}} \right]$, $\bf{c}$  could not be differentiated from ${\bf d}_{i\in {\cal D}_{0}}$. \textbf{Identifiable case:} Otherwise, $\bf{c}$  could be differentiated from ${\bf d}_{i\in {\cal D}_{0}}$ using Algorithm 4. This is done by  examining whether or not   the recovered MUCD  is true.    However, Alice could not predict the occurrence frequency of  two cases since the exposed codewords are random, which reduces the reliability of  pilot encoding/decoding in this architecture.
\end{problem}
RD problem  causes   an  instable  codeword CI  for LUs. We will analyze  this phenomenon  in the next section since in this section we only focus on the design of architecture.

\textbf{Output:} Precise ${{\bf{b}}_{i}},\forall i, 1\le i \le K$ under SC and WB-PJ attack; Unstably identified  ${{\bf{b}}_{i}},\forall i, 1\le i \le K$ for PB-PJ attack,

\subsection{Output Layer}
This layer is configured for storing  the  codewords classified and identified  from the previous layer.

\section{Reliability Bound Contraction Theory}
\label{RBCT}
In this section, we exploit the concept of IEP to measure  the  reliability of  SMPA. But, our main work is to mathematically characterize the instability  in this reliability and then aim to answer two questions, that is, how  to   reduce   the  instability and what level of stability can be achieved. This will be done by our proposed  RBC Theory.
\subsection{PA Reliability and Its Instability}
 In this hybrid attack scenario,  the codeword identification error  occurs  due to the PB-PJ attack, rather than WB-PJ and SC attack.  This could be easily proved using Proposition  1, 2 and 3.
\begin{theorem}
Given $K$ users and $ N^{\rm L}_{\rm P}$ subcarriers,  the IEP which is denoted  by $P$  under H2DF-$\left( {K,1,B} \right)$ code with size $C$ is  bounded by:
\begin{equation}\label{E.41}
P_{\rm lower} \le P \le P_{\rm upper}
\end{equation}
where $P_{\rm lower} = \frac{1}{C}$ and $P_{\rm upper}=\frac{1}{{{2K}}}$. The reliability of  SMPA is  defined by
\begin{equation}
{R_S} =  - {\log _{10}}P
\end{equation}
\end{theorem}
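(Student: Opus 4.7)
The plan is to localize the source of error and then bound it from both sides using the submatrix partition of the codebook. First, I would invoke Propositions 1, 2, and 3 together with Algorithm 2 to rule out any error contribution from the SC and WB-PJ modes: under these modes, the identified superimposed codeword ${\bf b}_{{\rm S},K}$ and the ASP check against ${\bf M}_K$ give a deterministic and exact decomposition, so $P$ is completely determined by the PB-PJ regime in which ${\bf b}_{\rm I}\ne {\bf b}_{{\rm S},K}$ and the RD problem of Problem 4 is active.

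Within that PB-PJ regime I would set up the probability space around the uniformly random choices of the $K+1$ codewords. Each LU draws uniformly from its own submatrix $\bigl[{\bf b}_{j\in{\cal B}_i}\bigr]$ of cardinality $\lceil C/K\rceil$, while Ava draws ${\bf c}$ from the whole codebook ${\bf B}$ of size $C$. I would condition on where ${\bf c}$ lands relative to the submatrix partition. When ${\bf c}$ lies in a submatrix already used by some LU and is exposed in ${\cal D}$, the Identifiable case of Problem 4 lets the MUCD check in Algorithm 4 recover the true assignment without error; only the Confusing case contributes to $P$.

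For the upper bound I would argue as follows. In the worst Confusing case, the contaminated submatrix contains exactly two candidate codewords (the true LU codeword and the exposed ${\bf c}$), so Alice faces a binary ambiguity and an intrinsic conditional error $1/2$. Because ${\bf c}$ is uniformly distributed across the $K$ submatrices, the probability that the confusion is actually triggered on a specific target LU is at most $1/K$. Multiplying gives $P \le 1/(2K)$. For the lower bound I would exhibit the irreducible collision event in which ${\bf c}$ coincides exactly with the specific codeword currently in use by the target LU: this happens with probability $1/C$ under uniform selection over ${\bf B}$, and no decoder — even one with oracle access to MUCD — can resolve the resulting symmetry. Hence $P \ge 1/C$. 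The reliability identity ${R_S}=-\log_{10}P$ then follows from the definition.

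The main obstacle will be setting up the probability model cleanly enough to make the ``best case'' of the lower bound and the ``worst case'' of the upper bound precise, rather than implicit. In particular, one has to be careful that the MUCD verification in Algorithm 4 really does dispose of all identifiable cases without residual error (so the lower bound is not weakened) and that no single PB-PJ realization can exceed the $1/(2K)$ worst-case figure (so the upper bound is not violated by rare multi-collision patterns). Once these are checked against Facts 1 and 2 on CSUP and the cover-free property of the H2DF code, the two-sided bound in \eqref{E.41} follows directly, and the resulting gap between $1/C$ and $1/(2K)$ is precisely the instability that motivates the reliability bound contraction developed in the remainder of the section.
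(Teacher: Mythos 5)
Your overall route is the same as the paper's: isolate PB-PJ as the only error-producing mode via Propositions 1--3, then bound the IEP by a worst-case/best-case analysis of the RD problem, with the upper bound obtained exactly as the paper does --- probability $1/K$ that the confusing case is triggered times a conditional binary-ambiguity error of $1/2$, giving $1/(2K)$.

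The one place you diverge, and where I think there is a real gap, is the lower bound. You anchor $P \ge 1/C$ to the event that ${\bf c}$ coincides exactly with the codeword currently in use by the target LU and claim no decoder can resolve the resulting symmetry. But when ${\bf c}$ equals one of the LUs' codewords it is covered by ${\bf b}_{{\rm S},K}$, so ${\bf b}_{\rm I}={\bf b}_{{\rm S},K}$ and the cover-free decomposition of ${\bf b}_{{\rm S},K}$ still returns the correct $K$-codeword set --- the paper explicitly notes (after Problem 3) that the covered case ``does not affect the ultimate decoding.'' So your collision event is not actually an error event, and your lower bound is unsupported as stated. The paper instead attributes the $1/C$ to the best (identifiable) case, where the residual error is the probability of duplicate codewords among the decomposed ${\bf b}_i$; also note the paper has Ava drawing ${\bf c}$ from the target LU's submatrix of size $C/K$ rather than from all of ${\bf B}$, so your probability model would not even yield $1/C$ for your chosen event. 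You would need to replace your collision event with the paper's duplicate-codeword event (or otherwise exhibit an error event of probability exactly $1/C$ that survives the MUCD check) for the lower half of the bound to go through.
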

\begin{proof}
 We assume that Eva is interested in the $i_{0}$-th LU and chooses a random codeword from submatrix $\left[ {{{\bf{b}}_{j \in {{\cal B}_{i_{0}}}}}} \right]$ as $\bf{c}$. What is observed at Alice is that the exposed codewords  from LUs could be randomly located  in arbitrary one of $K$ possible submatrices  and independent  with  the codeword choice of Eva. The  worst case happens if the confusing case occurs.  The probability is equal to ${{\rm{1}} \mathord{\left/
 {\vphantom {{\rm{1}} {{K}}}} \right.
 \kern-\nulldelimiterspace} {{K}}}$ and in this case,  the right  identification happens with probability ${{\rm{1}} \mathord{\left/
 {\vphantom {{\rm{1}} {{2}}}} \right.
 \kern-\nulldelimiterspace} {{2}}}$. The final IEP  is calculated as ${{\rm{1}} \mathord{\left/
 {\vphantom {{\rm{1}} {{2K}}}} \right.
 \kern-\nulldelimiterspace} {{2K}}}$.

 Otherwise, a best case (i.e., the identifiable case) occurs. The  IEP is then  transformed into  the probability of the occurrence of duplicate codewords  among  the decomposed codewords ${{\bf{b}}_{i}},\forall i, 1\le i \le K$, and  thus  calculated as  ${{\rm{1}} \mathord{\left/
 {\vphantom {{\rm{1}} C}} \right.
 \kern-\nulldelimiterspace} C}$.
\end{proof}
The non-tight IEP bounds  tell us that  the exact evaluation of  SMPA reliability depends on  the realization of  SMPA. From a long-term perspective, we can think that the SMPA reliability fluctuates in time.  Everytime SMPA is run, the differing random input (codewords) leads to a different random output, or the realization, of the SMPA process. The realization  of  SMPA process denoted by $X$ for the outcome IEP $P$ is the function $X\left( {t,P} \right)$, defined by $t \mapsto X\left( {t,P} \right)$.  However, how to model the statistic process is  our focus in the future and instead we hope to find a easy-to-implement technique to avoid this uncertainty smartly even though it means sacrificing some performance.

To pursue the matter further we define the maximum IEP difference between  arbitrary two realizations of SMPA process within  all possible time slices, as the  long-term instability. More specifically, we have:
\begin{definition}
The long-term instability in SMPA reliability  is defined by:
\begin{equation}
{S_R} = {\log _{10}}\left( {{{{P_{{\rm{upper}}}}} \mathord{\left/
 {\vphantom {{{P_{{\rm{upper}}}}} {{P_{{\rm{Iower}}}}}}} \right.
 \kern-\nulldelimiterspace} {{P_{{\rm{Iower}}}}}}} \right)
\end{equation}
\end{definition}
Basically, precise repeated assessment  is very critical for  SMPA as we do not hope to encounter  a situation where every time the system operator uses it, the evaluation  of its reliability is provided  imprecisely. Therefore, the proposed technique   should be able to reduce ${S_R} $ to zero.

\begin{figure}[!t]
\centering \includegraphics[width=1\linewidth]{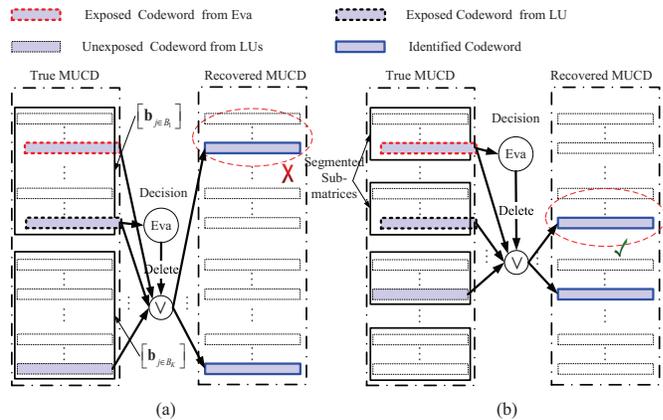}
\caption{ Identification enhancement for the confusing case.
For instance, Alice identifies the sub-codebook $\left[ {{{\bf{b}}_{j \in {{\cal B}_1}}}} \right]$ as the contaminated codebook. (a) The wrong decision happens because assuming any one of two exposed codewords in $\left[ {{{\bf{b}}_{j \in {{\cal B}_1}}}} \right]$  as $\bf{c}$ which is further eliminated from ${{\bf{b}}_{\rm I}}$  will generate true  MUCD; (b) Transform the wrong decision to the right decision with the help of segmented submatrices  and MUCD. }
\label{Sub-C3}
\end{figure}
\subsection{Observation on the Exposed Codewords}
Backing to the RD problem, we find that the key of reducing  ${S_R} $ lies in how to reduce the occurrence frequency of  \textbf{Confusing Case}.  Basically, it requires  that Alice is able to \emph{more precisely}  locate  the scope of the position fluctuation of exposed codewords  in  $\bf B$  and further differentiate  between different results. The most important is to  discover the  controllable variable achieving this.

We find that  the submatrix-level  resolution  is low. We bring up the subject of   the low submatrix-level  resolution here as each LU is \emph{solely} assigned with one submatrix and there exists only  $K$ choices in total for each LU.  In this sense,  the low resolution makes  the  scope of the position fluctuation of exposed codewords relatively \emph{extensive}. Therefore, everytime LUs select  their own codewords obeying  MUCD and Eva employs $\bf{c} $ which is identical to the codeword within the same submatrix as one  LU of interest,   the  \emph{ probability} that  the set of exposed codewords includes  $\bf{c} $   is high.  This can be seen in  Fig.~\ref{Sub-C3} (a).

However, an interesting phenomenon is that the codeword-level resolution is very high, namely,  a huge number of candidate  codewords for each LU exist.
\subsection{RBC Theory: Code Partition and Upper-Lower Bound Tradeoff}
The above observation  inspires us to perform  codebook partition for each submatrix  $\left[ {{{\bf{b}}_{j \in {{\cal B}_{i}}}}} \right], 1\le i \le K$, in other words,  the  controllable variable, denoted by $N$,  now is  identified as the number of \textbf{segmented submatrices}.  Theoretically,  if each of  LUs is assigned with arbitrarily  one  of  segmented submatrices, the codeword-level resolution is reduced and the lower bound of  IEP is enlarged. Fortunately, the submatrix-level  resolution  will be thus  improved and the scope of the position fluctuation of exposed codewords  is restricted. Fig.~\ref{Sub-C3}(b) shows how the recovered MUCD  in  confusing case is able to be transformed to be  identifiable. Then the upper bound  is reduced.
\begin{fact}[Upper-Lower Bound Tradeoff]
On  one hand, the larger $N$ is,  the lower  the upper bound is.  On the other hand, the larger $N$ will bring the less  codewords for pilot coding and therefore the larger lower bound.
\end{fact}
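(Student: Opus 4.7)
The plan is to re-derive the two bounds of Theorem~1 under the codebook-partition variant of H2DF-$(K,1,B)$ coding, and then read off the claimed monotonicities in $N$ directly. The crucial observation is that partition affects only two quantities in the proof of Theorem~1: the effective size of each LU's sub-codebook, and the granularity at which MUCD lets Alice separate Ava's codeword from the target LU's codeword. The upper bound depends only on the second quantity and the lower bound only on the first, so the two directions of the tradeoff can be handled independently.

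For the upper bound, I recall from the proof of Theorem~1 that $P_{\rm upper}=1/(2K)$ arose as the product of the probability $1/K$ of the \emph{confusing case} and the conditional mis-identification probability $1/2$ inside that case. Under partition each original submatrix $\left[{\bf{b}}_{j \in {\cal B}_i}\right]$ is split into $N$ disjoint sub-codebooks, and each LU is assigned a single one of them. The confusing case now demands that Ava's codeword, chosen uniformly from the original ${\cal B}_{i_0}$ which remains Ava's view of the target, land not merely in the same original submatrix but in the same \emph{segmented} sub-codebook as LU~$i_0$'s codeword; conditional on Ava already choosing inside ${\cal B}_{i_0}$, this further event has probability $1/N$. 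The conditional $1/2$ factor is unaffected because the BFPI hypothesis test is still between two candidates. Hence the upper bound becomes $1/(2KN)$, strictly decreasing in $N$.

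For the lower bound, the value $P_{\rm lower}=1/C$ was obtained as the probability that the decomposed LU codewords accidentally collide, computed from the full codebook of cardinality $C$. After partition, each LU must draw its codeword from a sub-codebook of cardinality $\lfloor C/(KN)\rfloor$, so the relevant collision probability scales as $1/\lfloor C/(KN)\rfloor \approx KN/C$. This quantity is manifestly monotone increasing in $N$, which is the second half of the tradeoff.

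The main obstacle I anticipate lies not in either calculation individually but in the structural prerequisite that legitimises both: one must verify that arbitrarily partitioning a cover-free H2DF-$(K,1,B)$ codebook into $KN$ sub-codebooks preserves the column-wise cover-free property and the row-wise differential coupling that underwrote Propositions~1--3 and the two-case decomposition in the proof of Theorem~1. Fortunately this reduces to the monotonicity of the cover-free property under column deletion, since any subset of columns of a cover-free matrix of order $K$ is itself cover-free of order $K$. Once that inheritance is in hand, the two bound calculations above go through essentially line-for-line with the substitutions $K \to KN$ for the submatrix count and $C \to C/(KN)$ for the per-LU alphabet size, and the stated tradeoff — upper bound shrinks like $1/(KN)$ while lower bound grows like $KN/C$ — follows immediately.
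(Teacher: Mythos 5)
Your proposal is correct and follows essentially the same route the paper uses to justify this Fact, namely re-deriving the two IEP bounds under codebook partition (Theorem~2): the upper bound acquires an extra factor $1/N$ because the confusing case now requires the exposed codeword to land in the particular segmented submatrix Ava used, giving $1/(2KN)$, while the lower bound grows with $N$ because the per-user codebook shrinks. One caveat on the second half: the paper normalizes the collision probability by the partitioned codebook size $C/N$, yielding a lower bound of $N/C$, whereas your substitution $C \to C/(KN)$ yields $KN/C$; this factor of $K$ is immaterial to the monotonicity asserted in the Fact, but if carried forward it would shift the optimal partition number away from the paper's $N=\sqrt{C/(2K)}$ in Theorem~3, so you should adopt the paper's normalization (consistent with its Theorem~1 baseline of $1/C$ rather than $K/C$) when quantifying the tradeoff.
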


\begin{remark}
Three points need to be identified. 1) Code partition does not affect the  randomness of codewords; 2)  Code partition does not affect the  randomness of exposed codewords; 3) Code partition  reduces solely the occurrence probability that  both $\bf{c}$ and the exposed codewords of LUs occur in one same segmented submatrix.
\end{remark}
Note that each node does not need to inform Alice of any valuable information, such as, the index of segmented submatrix as Alice is enabled  to identify the index employed because of the multiuser guarantee.

 \begin{theorem}
Given  $N$ segmented submatrices  for each submatrix,  the IEP under PB-PJ attack for  $K$  LUs  using H2DF-$\left( {K,1,B} \right)$ code is  updated and bounded by:
\begin{equation}
\frac{N}{C} \le P \le \frac{1}{{2KN}}
\end{equation}
\end{theorem}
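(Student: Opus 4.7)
The plan is to re-run the case analysis that produced the bounds in Theorem~1, now accounting for the finer spatial resolution induced by partitioning each submatrix $[\mathbf{b}_{j \in \mathcal{B}_i}]$ into $N$ segmented submatrices. Fact~4 already suggests the direction each bound moves: refining the resolution shrinks the confusing-case probability (tightening the upper bound), while restricting each LU's codeword pool enlarges the collision probability (loosening the lower bound). The task is to quantify both effects as exact factors of $N$.

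For the upper bound, I would revisit the confusing case. In Theorem~1 this event occurred with probability $1/K$, because Eva's codeword $\mathbf{c}$, drawn from $[\mathbf{b}_{j\in \mathcal{B}_{i_0}}]$ for a targeted LU $i_0$, could coincidentally lie in the same submatrix as some LU's exposed codeword, defeating MUCD. Under partition, MUCD is verified at the finer segmented-submatrix level; and by Remark~3 the randomness of the codewords and of the exposed codewords is preserved, so each LU independently occupies one out of $N$ segmented submatrices within its own original submatrix uniformly at random. For the confusion to persist, Eva's codeword must coincide not merely with the target LU's original submatrix but with the particular segmented submatrix currently in use, contributing an extra independent factor $1/N$. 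The confusing-case probability thus becomes $1/(KN)$. The local binary ambiguity between $\mathbf{c}$ and the LU's codeword still resolves correctly with probability $1/2$, yielding the upper bound $1/(2KN)$.

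For the lower bound, the identifiable-case analysis of Theorem~1 reduced the IEP to the collision probability among the $K$ decomposed codewords, evaluated as $1/C$ from the full codebook size. With partition, each LU is restricted to one segmented submatrix of cardinality on the order of $C/(KN)$, so the effective codebook across the $K$ LUs shrinks from $C$ to $C/N$. Rescaling the collision probability by $N$ gives $N/C$. Combining both directions yields $N/C \le P \le 1/(2KN)$.

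The main obstacle, I expect, is to verify cleanly that the extra factor $1/N$ in the confusing-case probability composes multiplicatively with the original $1/K$, rather than being entangled with it. This amounts to establishing conditional independence between (i)~Eva's codeword landing in the target LU's original submatrix and (ii)~it further landing in the specific segmented submatrix currently in use, which must be justified from the symmetry asserted by Remark~3. A secondary subtlety is ensuring that the binary ambiguity factor $1/2$ inside the confusing case does not acquire any dependence on $N$; this follows once the local structure of the MUCD test is shown to be unchanged by partition, since the test still reduces to a two-way swap between $\mathbf{c}$ and a single LU codeword sharing one segmented submatrix.
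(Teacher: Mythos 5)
Your proposal is correct and follows essentially the same reasoning as the paper: the upper bound comes from multiplying the confusing-case probability $1/K$ by the extra factor $1/N$ (the exposed codeword must now land in the specific segmented submatrix Eva uses) and by the residual $1/2$ ambiguity, while the lower bound comes from the effective codebook size shrinking from $C$ to $C/N$, rescaling the collision probability to $N/C$. The independence and symmetry concerns you flag are handled in the paper only implicitly via Remark~3, so your treatment is, if anything, slightly more careful on the same route.
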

\begin{proof}
 Due to the uncertainty of which segmented submatrix is adopted by the $i_{0}$-th LU,  Eva randomly chooses a codeword in one of the  $N$  segmented submatrices of $\left[ {{{\bf{b}}_{j \in {{\cal B}_{i_{0}}}}}} \right]$ as $\bf{c}$.   In comparison to the worst case without code partition in Theorem 1,  now   the identification error happens iff  there exists an exposed codeword occurring  exactly within  the segmented submatrix employed by  Eva, rather than the original submatrix. We can calculate  IEP   as ${1 \mathord{\left/
 {\vphantom {1 {2K}}} \right.
 \kern-\nulldelimiterspace} {2K}} \times {1 \mathord{\left/
 {\vphantom {1 N}} \right.
 \kern-\nulldelimiterspace} N}$.  For the best case,  the occurrence of duplicate codewords  among   decomposed codewords is  recalculated as  ${{{N}} \mathord{\left/
 {\vphantom {{{N}} C}} \right.
 \kern-\nulldelimiterspace} C}$ since the size of  codebook for pilot coding is reduced to ${C \mathord{\left/
 {\vphantom {C N}} \right.
 \kern-\nulldelimiterspace} N}$.
\end{proof}

 \begin{theorem}
Given $N$ segmented submatrices for each submatrix,  the optimal tradeoff between the upper and lower bounds of IEP for  $K$  LUs under PB-PJ attack is achieved iff:
\begin{equation}
N{\rm{ = }}\sqrt {\frac{C}{2K}}
\end{equation}
The optimal and exact expression  of IEP is derived  by:
\begin{equation}
P=\sqrt {\frac{{\rm{1}}}{{2CK}}}
\end{equation}
In this case, $S_{R}$  and  $R_{S}$ are respectively equal to zero and  $\frac{1}{2}{\log _{10}}2CK$, achieving the  stable and highly-reliable SMPA  performance.
\end{theorem}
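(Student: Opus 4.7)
The plan is to exploit directly the bounds established in Theorem 3, namely $N/C \le P \le 1/(2KN)$, and to recognize that the ``optimal tradeoff'' language in the theorem statement is formalized by the requirement that the two bounds coincide (so that $S_R$, the $\log_{10}$-ratio of the two bounds, vanishes). The whole argument is therefore a short optimization plus two elementary substitutions; there is no genuinely combinatorial obstacle, only the need to justify why equating the bounds is the right optimality condition.

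First I would observe that, by Theorem 3, the upper bound $1/(2KN)$ is strictly decreasing in $N$ while the lower bound $N/C$ is strictly increasing in $N$. Thus the bound interval $[N/C,\,1/(2KN)]$ is nonempty only when $N/C \le 1/(2KN)$, i.e.\ $N \le \sqrt{C/(2K)}$, and for any $N$ in this admissible range the two bounds meet at the unique endpoint $N^{\ast} = \sqrt{C/(2K)}$. Any smaller $N$ leaves a strictly positive gap (so $S_R>0$), and any larger $N$ violates the feasibility of the bound ordering (and reduces the codebook size below what is needed to even host the coding). This pins down $N = \sqrt{C/(2K)}$ as the only choice that realizes $S_R = 0$, which establishes the ``iff'' claim.

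Next I would substitute $N = \sqrt{C/(2K)}$ into either bound to obtain the exact IEP. Using the lower bound,
\begin{equation}
P \;=\; \frac{N}{C} \;=\; \frac{1}{C}\sqrt{\frac{C}{2K}} \;=\; \sqrt{\frac{1}{2CK}},
\end{equation}
and a symmetric calculation with the upper bound $1/(2KN)$ gives the same value, confirming the collapse of the interval. From here the reliability computation is immediate:
\begin{equation}
R_S \;=\; -\log_{10} P \;=\; -\log_{10}\!\left(\sqrt{\tfrac{1}{2CK}}\right) \;=\; \tfrac{1}{2}\log_{10}(2CK),
\end{equation}
and $S_R = \log_{10}(P_{\rm upper}/P_{\rm lower}) = \log_{10} 1 = 0$.

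The only subtle point, and the place I expect to spend most of the justification, is the well-posedness of the optimum: $N$ is in principle an integer (a count of segmented submatrices), whereas $\sqrt{C/(2K)}$ generally is not. I would either (i) restrict attention to codebooks whose size $C$ is chosen so that $C/(2K)$ is a perfect square (a mild design constraint the H2DF construction allows), or (ii) note that for the bound-contraction objective it suffices to take $N$ equal to the nearest integer to $\sqrt{C/(2K)}$, in which case $S_R$ is no longer exactly zero but vanishes as $C\to\infty$, and the stated closed-form expression for $P$ becomes an asymptotically exact equality. Either convention leaves the algebra above untouched, so the bulk of the ``proof'' is just the two-line monotonicity argument and the substitution.
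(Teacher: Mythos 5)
Your proposal is correct and follows essentially the same route as the paper, which simply sets $\frac{N}{C}=\frac{1}{2KN}$, solves for $N$, and substitutes back to obtain $P=\sqrt{1/(2CK)}$. Your added monotonicity justification and the remark on the integrality of $N$ go slightly beyond the paper's one-line argument but do not change the approach.
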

By configuring $\frac{N}{C}= \frac{1}{{2KN}}$, we can derive the above optimal $N$.  The codeword CI can also be seen in Algorithm 4.
\subsection{Code Construction and Performance Analysis}
\label{CCPA}
In order to analyze the coding performance in practical communications systems, it is required that  the expression of $C$ should be specified, which depends  on the specific  construction method of cover-free code. We consider the maximum distance separable (MDS) code~\cite{Singleton} based code construction method. The reason is that such a code may
be conveniently  augmented with additional words, without decreasing its distance, hence its order, (or namely the  number $K$ of LUs) by letting the number of ones increases suitably. It means that this method is resilient even when   signals on subcarriers are interfered and thus the number of signals  in detection is changed.  The specific  construction method  can be found in~\cite{Kautz}.  The overall  performance of H2DF-$\left( {K,1,B} \right)$  code  here refers to the code rate, antenna and frequency-domain resource overheads and IEP.

The rate of H2DF-$\left( {K,1,B} \right)$ code  of  length  $B$ and cardinality $C$, denoted by $R\left( {C,B} \right)$, is defined in ~\cite{yachkov} by:
\begin{equation}
R\left( {C, B} \right){\rm{ = }} {{{{{\log }_2}C}} \mathord{\left/
 {\vphantom {{\left( {{{\log }_2}C} \right)} {B}}} \right.
 \kern-\nulldelimiterspace} {B}}
 \end{equation}
 Under  MDS code based code construction, the size of  $\bf B$  satisfies $ B = N^{\rm L}_{\rm{P}}= q\left[ {1 +  K\left( {k - 1} \right)} \right],C = {q^k}, q \ge  K\left( {k - 1} \right)  \ge 3, K\ge 2$, which mathematically denotes  the frequency-domain resource overheads.  Considering  the H2DF encoding   process,  we notice that  the detection process   depends on  $N_{\rm T}$ and  $K$ and the codebook formulation  process depends on $N^{\rm L}_{\rm P}$ and $K$. Therefore, we  introduce the function relationship among $N_{\rm T}$, $N^{\rm L}_{\rm P}$ and  $K$ as follows:
\begin{equation}
\begin{array}{l}
\gamma \left( {\varepsilon ^*=0} \right) = f\left( {{N_{\rm{T}}},K,{\varepsilon ^*=0}} \right), N^{\rm L}_{\rm{P}} = q\left[ {1 +  {K }\left( {k - 1} \right)} \right], \\
 q \ge  {K}\left( {k - 1} \right)\ge 3, K \ge 2.
\end{array}
\end{equation}
where  the function $f$ is the one defined by the  Eq. (49) in~\cite{Kobeissi}.
\begin{figure*}[!t]
{
  \includegraphics[width=2.55in]{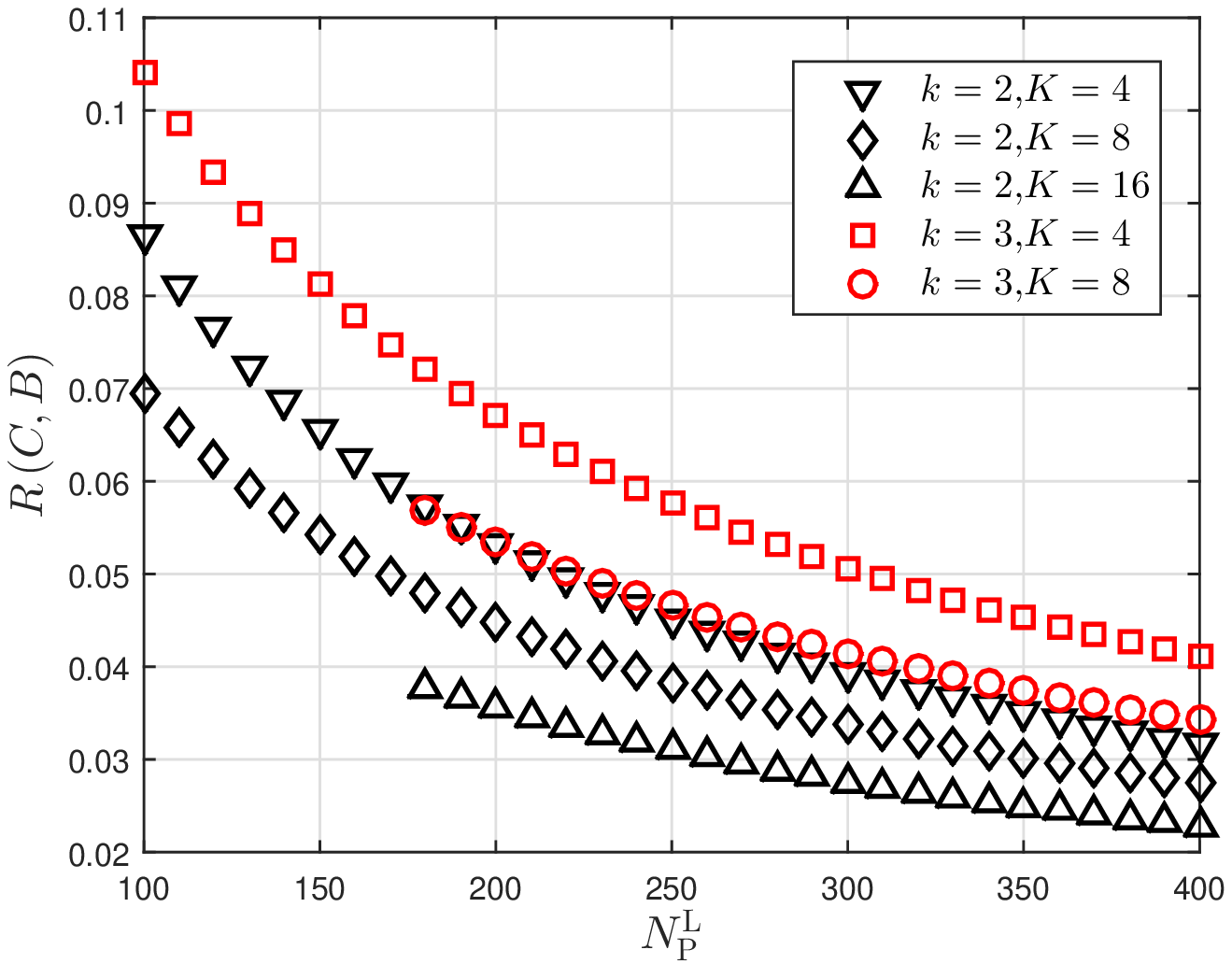} \hspace{-18pt} \includegraphics[width=2.55in]{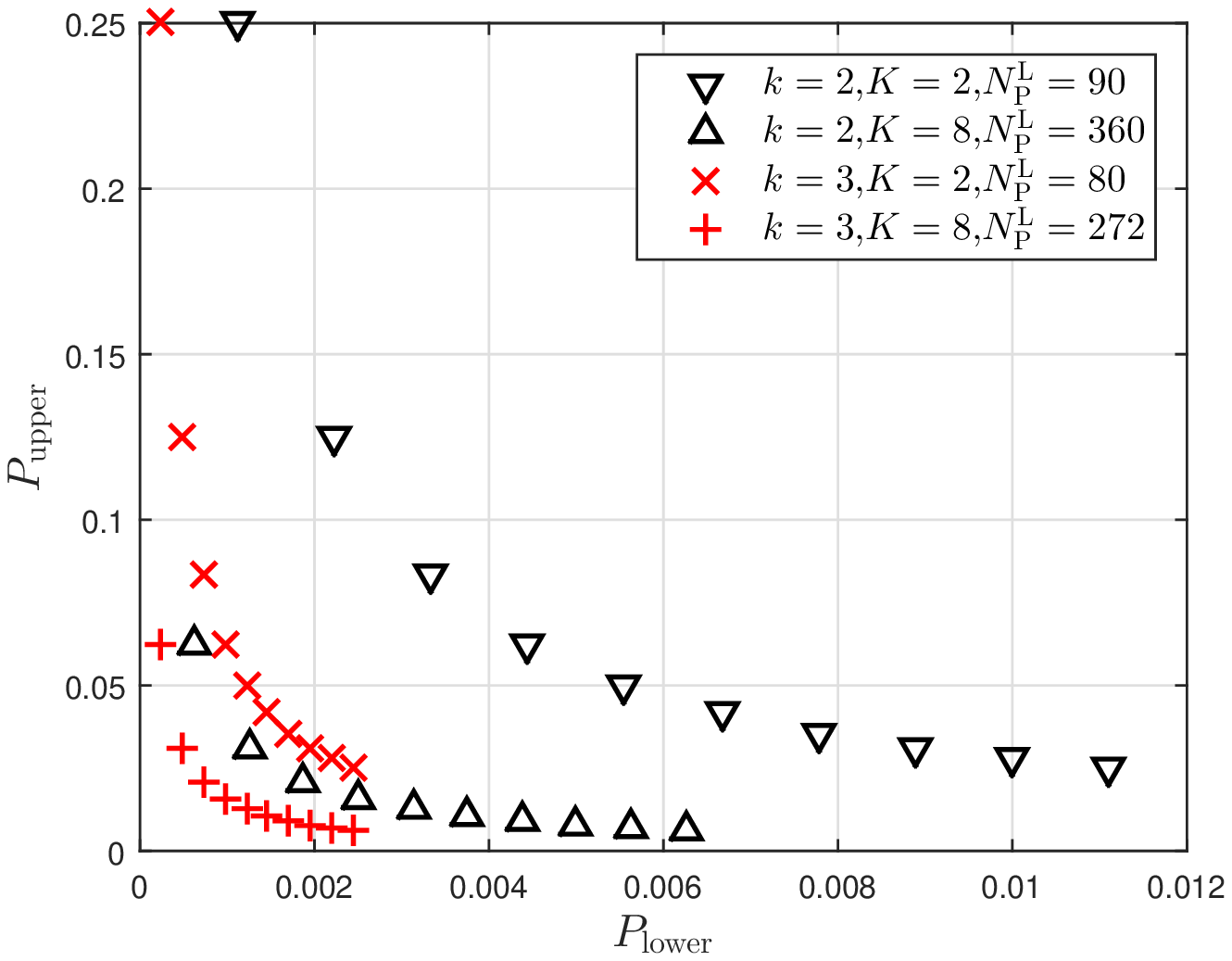}\hspace{-13pt}  \includegraphics[width=2.55in]{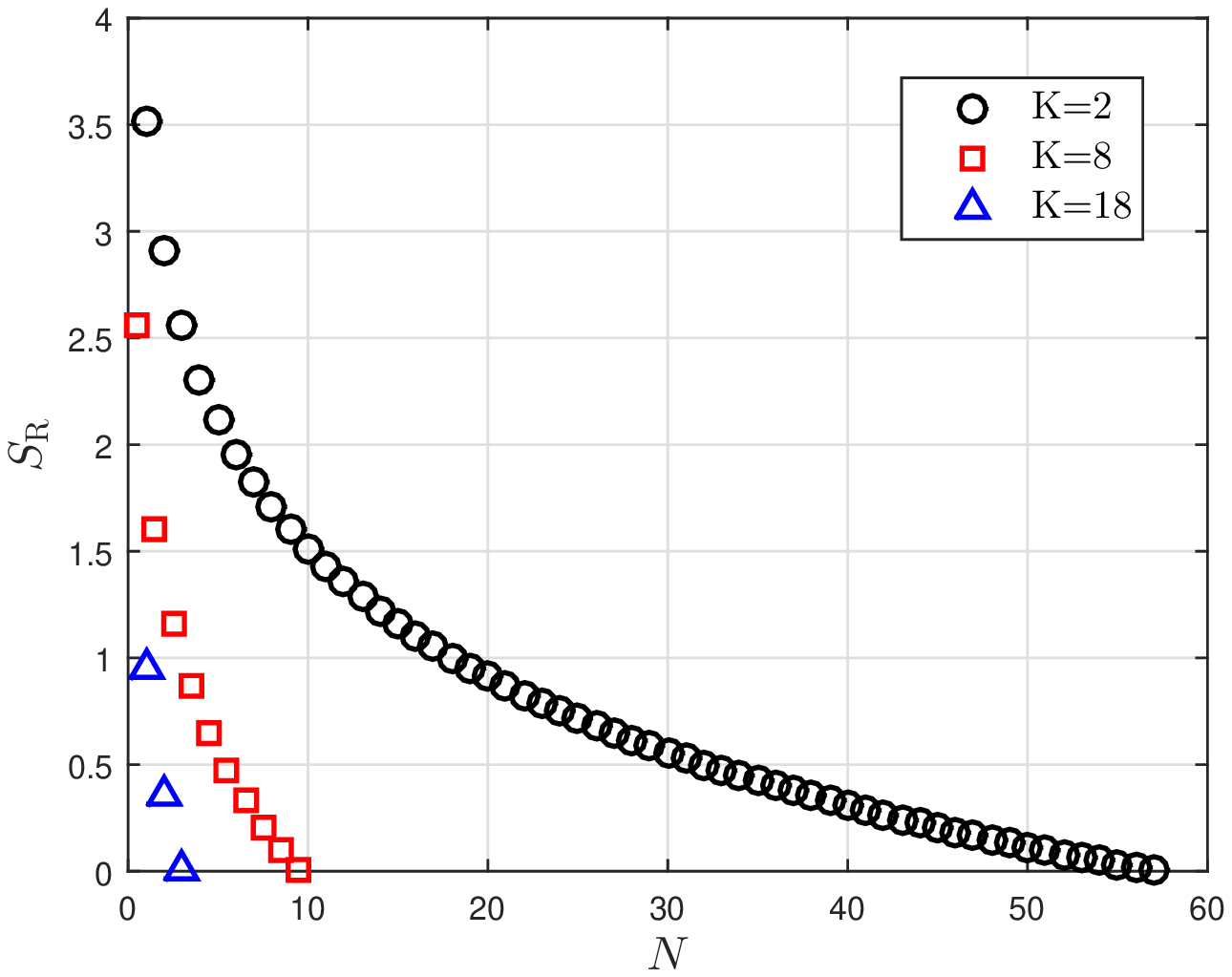}}
  \centerline{(a)\hspace{160pt}(b)\hspace{160pt}(c)}\vspace{-5pt}
  \caption{(a) The code rate $R\left( {C,{\rm{B}}} \right)$ versus $N^{\rm L}_{\rm P}$ under various $k$ and $K$; (b) The upper-lower bound tradeoff  curve under various $K$ and $N^{\rm L}_{\rm P}$; (c) The instability in SMPA reliability  versus $N^{\rm L}_{\rm P}$ under various $k$ and $K$.}
  \label{Simulations}
\end{figure*}
\begin{figure*}[!t]
{
  \includegraphics[width=2.55in]{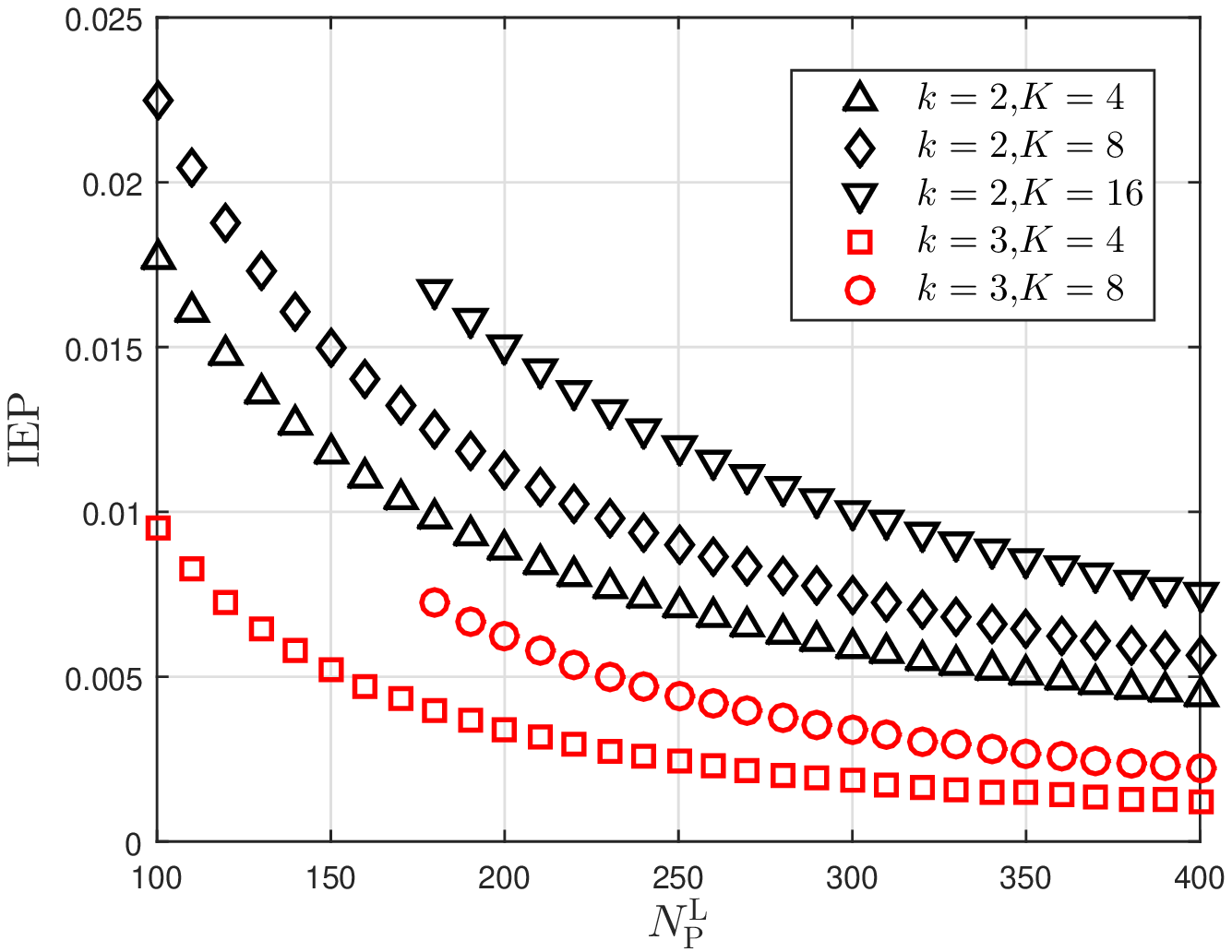} \hspace{-18pt} \includegraphics[width=2.55in]{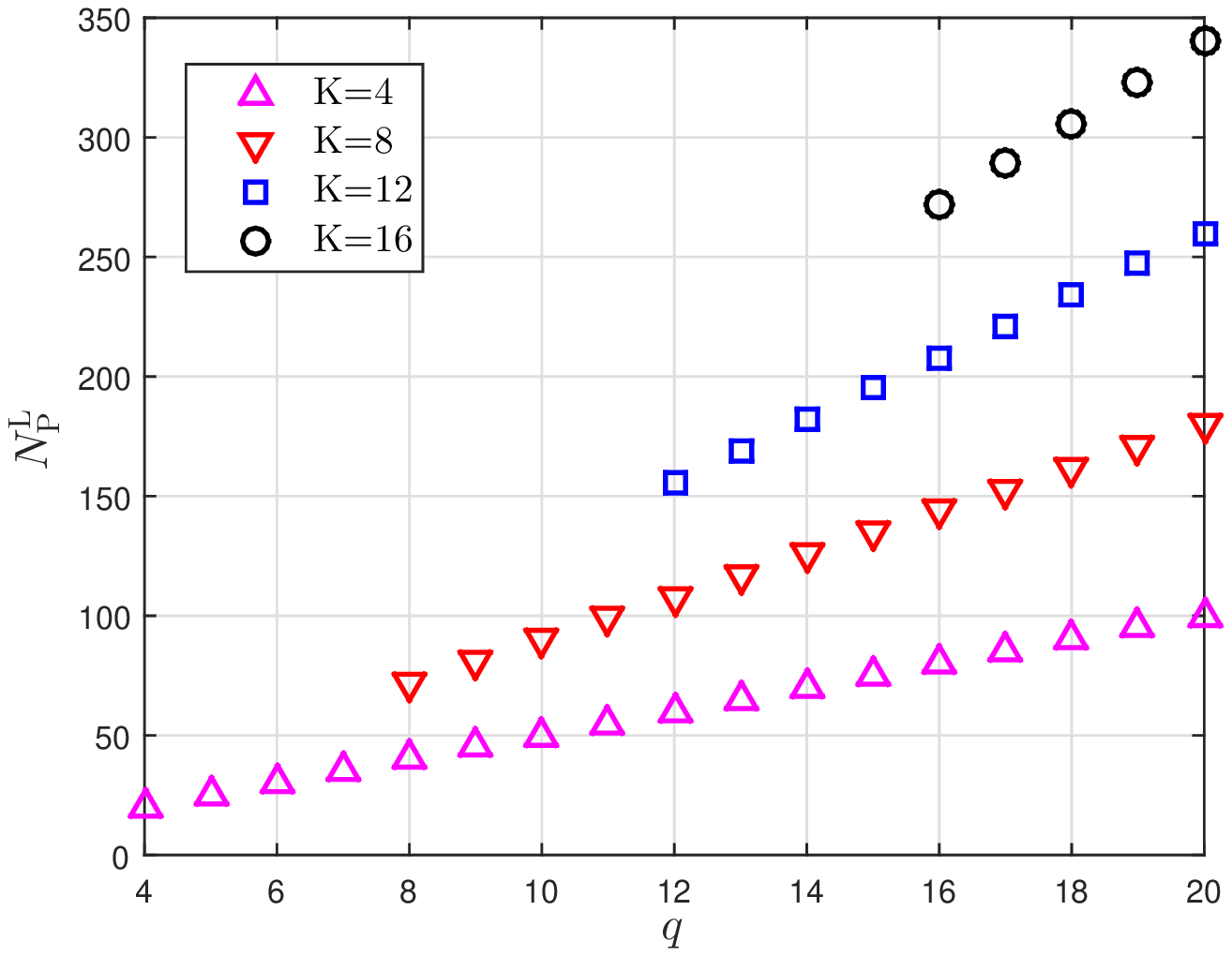}\hspace{-13pt}  \includegraphics[width=2.55in]{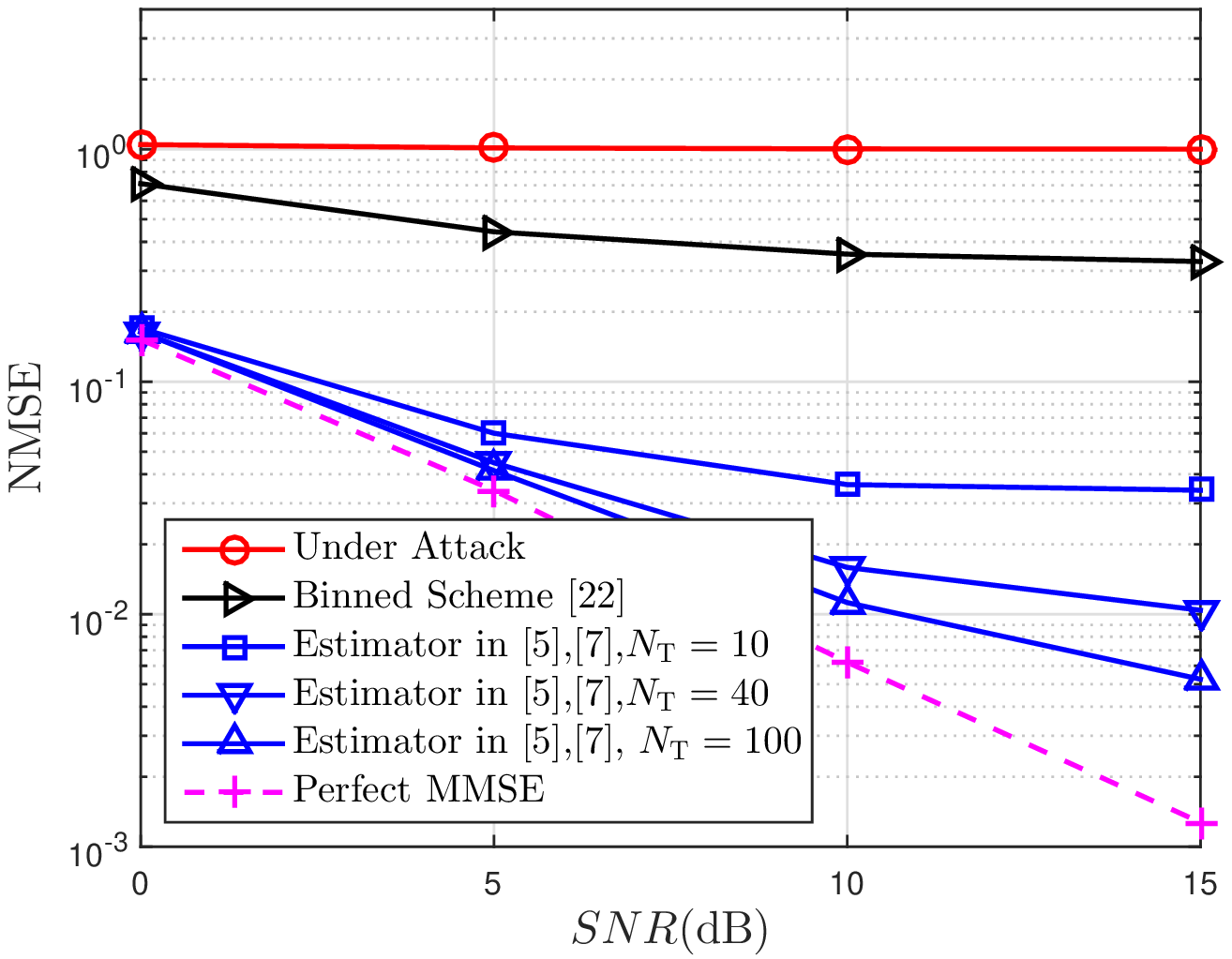}}
  \centerline{(a)\hspace{160pt}(b)\hspace{160pt}(c)}\vspace{-5pt}
  \caption{(a) The  IEP under optimal tradeoff  versus $N^{\rm L}_{\rm P}$ with  various $k$ and $K$; (b) The frequency domain overheads under various $K$; (c) The NMSE versus SNR of LUs  under various  $N_{\rm T}$.}
  \label{Simulations1}
\end{figure*}
 The IEP,  depending on $N^{\rm L}_{\rm P}$ and $K$, thus can  be formulated as follows:
 \begin{theorem}
 With  MDS based code construction  method for H2DF-$\left( {K,1,B} \right)$ code,  IEP can be simplified into:
\begin{equation}\label{E.36}
P=\sqrt {\frac{{{{\left[ {1 + K\left( {k - 1} \right)} \right]}^k}}}{{2{{\left( {N_{\rm{P}}^{\rm{L}}} \right)}^k}K}}}
\end{equation}
where
$N_{\rm{P}}^{\rm{L}} \ge K\left( {k - 1} \right)\left[ {1 + K\left( {k - 1} \right)} \right]$,  $K\left( {k - 1} \right)\ge 3$.
 \end{theorem}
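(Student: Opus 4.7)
The plan is to obtain the claimed closed form by a direct substitution of the MDS-based construction parameters into the optimal IEP already derived in Theorem~3, and then to translate the admissibility constraint $q\ge K(k-1)$ on the MDS parameter into the stated lower bound on $N_{\rm P}^{\rm L}$.

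First, I would invoke Theorem~3, which asserts that under the optimal codebook partition with $N=\sqrt{C/(2K)}$ segmented submatrices, the IEP collapses to the exact value $P=\sqrt{1/(2CK)}$. This step requires no new work; it only needs that the size $C$ and length $B$ of the H2DF-$(K,1,B)$ code be specified. Second, I would import the MDS-based construction parameters stated in Section~\ref{CCPA}, namely $B=N_{\rm P}^{\rm L}=q[1+K(k-1)]$ and $C=q^k$ together with the admissibility constraints $q\ge K(k-1)\ge 3$ and $K\ge 2$. Third, I would eliminate $q$ from these two identities by solving the linear relation $B=q[1+K(k-1)]$ for $q$ and substituting into $C=q^k$, which yields
\begin{equation}
C=\left(\frac{N_{\rm P}^{\rm L}}{1+K(k-1)}\right)^{k}=\frac{(N_{\rm P}^{\rm L})^{k}}{[1+K(k-1)]^{k}}.
\end{equation}

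Fourth, I would plug this expression for $C$ into $P=\sqrt{1/(2CK)}$ and simplify, which delivers exactly $P=\sqrt{[1+K(k-1)]^{k}/(2(N_{\rm P}^{\rm L})^{k}K)}$, matching the stated formula. Fifth, for the admissibility region I would observe that $q\ge K(k-1)$ combined with $N_{\rm P}^{\rm L}=q[1+K(k-1)]$ immediately gives $N_{\rm P}^{\rm L}\ge K(k-1)[1+K(k-1)]$, while $K(k-1)\ge 3$ is inherited verbatim from the MDS constraint.

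There is no serious obstacle; the derivation is a substitution. The only subtlety worth flagging is that Theorem~3 presumes the optimal partition number $N=\sqrt{C/(2K)}$ is realizable (i.e.\ that $C/(2K)$ is a perfect square), so strictly speaking one should either restrict to $(q,k,K)$ triples where this holds or interpret Theorem~4 as the achievable IEP of the tightest feasible partition; under the MDS parameterization above, the freedom in choosing $q$ and $k$ makes it easy to meet this condition, so the stated identity is attained. Beyond this remark, the proof is purely algebraic.
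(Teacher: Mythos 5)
Your proposal is correct and follows essentially the same route as the paper, which likewise obtains Theorem~4 by substituting $C=q^{k}$ and $q=N_{\rm P}^{\rm L}/\left[1+K(k-1)\right]$ into the optimal IEP $P=\sqrt{1/(2CK)}$ of Theorem~3 and carrying the constraint $q\ge K(k-1)$ over to $N_{\rm P}^{\rm L}$. Your added remark about the realizability of $N=\sqrt{C/(2K)}$ is a reasonable caveat the paper does not discuss, but it does not change the argument.
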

Using $C = {q^k}$,  Eq. (19) and (21), we can easily derive the above theorem.
\section{Simulation Results}
\label{SR}
 \begin{table}[!t]\footnotesize
\caption{\label{tab:test}Simulation Parameters and Values}
\vspace{-15pt}
\begin{center}
\footnotesize
\begin{tabular*}{9cm}{@{\extracolsep{\fill}}ll}
  \toprule
 \multicolumn{1}{c}{ Simulation Parameters}  & \multicolumn{1}{c}{ Values }\\
  \midrule
  \multicolumn{1}{c}{City scenario}&\multicolumn{1}{c}{Urban/non-line-of-sight (NLOS)}
  \\
  \multicolumn{1}{c}{Antennas at BS}&\multicolumn{1}{c}{ Uniform linear array (ULA)}
  \\
  \multicolumn{1}{c}{Maximum number of LUs}&\multicolumn{1}{c}{$K\le 19$}
  \\
  \multicolumn{1}{c}{  Channel fading and scattering model}&\multicolumn{1}{c}{Rayleigh and One-ring~\cite{Xu_Optimal}}
  \\
  \multicolumn{1}{c}{Carrier  frequency }&\multicolumn{1}{c}{$f=$2GHz}
  \\
  \multicolumn{1}{c}{Bandwidth}&\multicolumn{1}{c}{20MHz}
  \\
    \multicolumn{1}{c}{Coherence bandwidth}&\multicolumn{1}{c}{20KHz}
  \\
    \multicolumn{1}{c}{Coherence time }&\multicolumn{1}{c}{${T_{\rm{c}}} = {c \mathord{\left/
 {\vphantom {c {\left( {fv} \right)}}} \right.
 \kern-\nulldelimiterspace} {\left( {fv} \right)}}$}
  \\
  \multicolumn{1}{c}{Maximum velocity of LUs }&\multicolumn{1}{c}{ $v$=360km/h}
  \\
    \multicolumn{1}{c}{Available subcarriers for H2DF coding}&\multicolumn{1}{c}{$N^{\rm A}_{\rm P}=N^{\rm L}_{\rm P}\le {{1200} \mathord{\left/
 {\vphantom {{1200} 3}} \right.
 \kern-\nulldelimiterspace} 3}$; }
  \\
    \multicolumn{1}{c}{Available subcarriers for CIR estimation}&\multicolumn{1}{c}{$N^{\rm A}_{\rm E}=N^{\rm L}_{\rm E}\le 128$;}
  \\
      \multicolumn{1}{c}{Number of channel taps}&\multicolumn{1}{c}{$L_{\rm s}=6$;}
  \\
    \multicolumn{1}{c}{Pilot insertion mode for  CIR estimation }&\multicolumn{1}{c}{Block type}
  \\
  \multicolumn{1}{c}{Channel estimator}&\multicolumn{1}{c}{\cite{Xu_Optimal}}
  \\
  \multicolumn{1}{c}{Modulation}&\multicolumn{1}{c}{OFDM with normal CP}
  \\
    \multicolumn{1}{c}{ Slot structure}&\multicolumn{1}{c}{1 slot= 7 OFDM symbols }
  \\
  \bottomrule
 \end{tabular*}
\end{center}
\vspace{-15pt}
\end{table}
 In this section, we simulate the performance of SMPA  from  two main perspectives, that is, the  coding  perspective and the CIR estimation perspective.  For the former, we focus on five metrics, that is, the code rate,  upper-lower bound tradeoff  curve,  the curve of instability variations, the  IEP curve under the optimal  bound tradeoff and the overheads in coding. We are more concerned about the configuration of system parameters, i.e., $k$, $K$, $q$, $N^{\rm L}_{\rm P}$,  on the impact of those metrics. Usually, $k$ is set to be 2 and 3, which is enough under the practical system configuration. We should also note that $N^{\rm L}_{\rm P}$ is a function of $K$ and $q$. Based on the constraint of the prime power $q$, we know that  $N^{\rm{L}}_{\rm{P}}$ is bounded by $ K\left( {k - 1} \right)\left[ {1 + K\left( {k - 1} \right)} \right]$. Under this condition, we configure $K$ and $k$  artificially and examine the influence of variations of $N_{\rm{P}}^{\rm{L}}$ on the code rate. To evaluate other four metrics, we  assume that  $N^{\rm L}_{\rm P}$ always achieves its lower bound, that is, $K\left( {k - 1} \right)\left[ {1 + K\left( {k - 1} \right)} \right]$.   For the channel estimation part,  we consider the basic configuration shown in Table II.

Fig.~\ref{Simulations} (a) presents the curve of code rate versus $N^{\rm L}_{\rm P}$. It indicates us three facts: 1) Increasing $K$ and $N^{\rm L}_{\rm P}$ will lower down the code rate;  For example, the code rate ranges from 0.04 to 0.03 when increasing $K$ from 4 to 16 at  $k=2$ and $N^{\rm L}_{\rm P}=300$. 2) Increasing  $k$ will  increase the code rate; For example, the code rate increases from 0.03 to 0.04 if $k$ is increased from 2 to 3 when  $K=4$ and $N^{\rm L}_{\rm P}=400$.3)  Increasing  $K$  will increase the  overheads of   $N^{\rm L}_{\rm P}$ since the lower bound of  available $N^{\rm L}_{\rm P}$, that is, $K\left( {k - 1} \right)\left[ {1 + K\left( {k - 1} \right)} \right]$,   increases with the increase of $K$. For example, when $k=3$, the increase of $K$ from 4 to 8 will bring the lower bound of $N_{\rm{P}}^{\rm{L}}$  increasing  from 72 to 272.

 Fig.~\ref{Simulations}(b) presents the  upper-lower bound tradeoff  curve.  We plot ten discrete points  on which there exists a relationship between the upper bound $\frac{1}{{2KN}}$ and lower bound $\frac{N}{C}$. $N$  is configured  from 1 to 10. The reason is that $N$ should be no more than  the optimal value, i.e., $N{\rm{ = }}\sqrt {\frac{C}{2K}}$.  In this context, we configure  $k$ to be 2 and 3 and  $K$ to be  2 and 8.   As we can see, there exists a tradeoff curve on which the upper bound  has to be sacrificed to maintain a certain level of lower bound.

Fig.~\ref{Simulations} (c) presents the curves of instability in SMPA reliability  versus the number of segmented submatrices  under $K=2, 8, 18$. Considering Eq.(16) and Eq.(17), we can know that $S_{\rm R}$ is equal to:
\begin{equation}
{S_{\rm{R}}} =  - 2{\log _{10}}N + G
\end{equation}
where $G = k{\log _{10}}B - k{\log _{10}}\left[ {1 + K\left( {k - 1} \right)} \right] - {\log _{10}}2K$ and $1 \le N \le \frac{{N_{\rm{P}}^{\rm{L}}}}{{K + 1}}\left[ {\sqrt {\frac{1}{{2K}}} } \right]$. From the curves, we can see that the increase of $K$ makes the instability approach  zero more quickly.  This demonstrates that our proposed RBC theory is very suitable for multiuser environment.

Fig.~\ref{Simulations1}(a) presents the value of IEP versus  $N^{\rm L}_{\rm P}$.   It indicates us three facts: 1) Increasing $k$ and $N^{\rm L}_{\rm P}$ will lower down the IEP;  For example, IEP  ranges from $7.5 \times {10^{ - 3}}$ to $3.4 \times {10^{ - 3}}$  with the increase of $k$ from 2 to 3 when $K=8$ and $N^{\rm L}_{\rm P}=300$. When $k=3$ and $K=8$, IEP decreases from $4.4 \times {10^{ - 3}}$ to $2.2 \times {10^{ - 3}}$ with the increase of $N^{\rm L}_{\rm P}$ from 250 to 400. 2) Increasing  $K$ will  increase the IEP; For example, when $N^{\rm L}_{\rm P}=400$ and $k=2$, the IEP  increases from $4.4 \times {10^{ - 3}}$ to $5.6 \times {10^{ - 3}}$ and further to $7.5 \times {10^{ - 3}}$, with the increase of $K$ from 4 to 8 to 16. 3)  Increasing  $K$  will also increase the  overheads of   $N^{\rm L}_{\rm P}$ since the lower bound of  available $N^{\rm L}_{\rm P}$  increases with the increase of $K$.  In literature~\cite{Xu_CF},   the IEP performance is 0.5 if the two conditions hold true: 1) Eva launches randomly-imitating attack after  acquiring $\bf B$ and 2) Its array spatial fading correlation  is not known by Alice, or otherwise Eva has the same mean AoA with LU of interest. In  comparison to the scheme in~\cite{Xu_CF}, our scheme is more robust under the three conditions and able to break  down this IEP floor, i.e., 0.5,  because it is a pure information  coding technique, not depending on the spatial fading correlation models.

In Fig.~\ref{Simulations1}(b), we simulate the coding overheads  in the respect of  $N^{\rm L}_{\rm P}$. Note that the antenna resource overheads  in terms of $N_{\rm T}$ can be seen Fig.~\ref{PF}. We do not simulate it again here. As we can see, $N^{\rm L}_{\rm P}$ increases linearly  with the increase of  the size  of  codebook, i.e., $C$, or  equivalently $q$.  Theoretically,  $N^{\rm L}_{\rm P}$ is a linear function of $q$ when   $q \ge K$.   For example, when $q$ changes from 12  to 20 at $K=8$, namely, $C$ increases from 144 to 400, $N^{\rm L}_{\rm P}$ increases about from 108 to 180.

In Fig.~\ref{Simulations1}(c), we stimulate the performance  of  CIR estimation. Normalized mean squared error (NMSE)  is simulated versus   signal-to-noise ratio (SNR) of LUs under arbitrary SNR of Ava.  For the sake of simplicity,  we assume $\rho _{{\rm{L}}, m}=\rho _{{\rm{L}}},\forall m$.  The performance under  this type of estimator  is not influenced by the specific value of $\rho_{\rm A}$ due to the subspace projection property. We do not consider the case where there is no attack since in this case LS estimator is a natural choice. The CIR estimation  under PTS attack is only presented  since  the estimation error floor  under PTN and PTJ attack can be easily understood to be very high. The binned scheme proposed in~\cite{Shahriar2} is simulated as an another  comparison scheme.   As we can see, attack could cause a high-NMSE floor on  CIR estimation. This phenomenon can also be seen in the binned scheme~\cite{Shahriar2}. However, the estimation in our proposed framework  breaks down this floor and its NMSE  gradually decreases  with the increase of transmitting antennas and gradually  approaches the NMSE curve under perfect  minimum mean-square error (MMSE)  case which serves as   a performance benchmark.
\section{Conclusions}
\label{Conclusions}
In this paper, we designed a H2DF coding theory for  a multi-user multi-antenna OFDM system to protect the pilot authentication process over frequency-selective fading channels. In this scheme, a framework of H2DF coding based  encoding/decoding mechanism  for  randomized pilots  through a HD model was developed, bringing about  the benefits of stable and highly-reliable SMPA. Low IEP was formulated by  upper  and lower bounds.  We verified the tradeoff relationship between   the upper bound and lower bound. We developed a bound contraction theory through which  an optimal upper-lower bound tradeoff can be achieved  using  a codebook partition technique such that the exact bound of IEP can be identified.  Some necessary  performance and simulations results were presented  to verify the robustness  of proposed scheme against pilot aware  attack.

\end{document}